\renewcommand{\Pr}{\ensuremath{\operatorname{Pr}}}
\newtheorem{proposition}{{\bf Proposition}}
\newtheorem{corollary}{{\bf Corollary}}
\newcommand{\qed}{\nobreak \ifvmode \relax \else
  \ifdim\lastskip<1.5em \hskip-\lastskip
  \hskip1.5em plus0em minus0.5em \fi \nobreak
  \vrule height0.75em width0.5em depth0.25em\fi}
\newcounter{step}
\newlength{\totlinewidth}
  {\end{list}%
  \rule{\linewidth}{1pt}}
\newcounter{substep}
\newlength{\aligntop}
\newlength{\alignbot}
\renewenvironment{align}{%
  \vspace{\aligntop}
  \start@align\@ne\st@rredfalse\m@ne
}{%
  \math@cr \black@\totwidth@
  \egroup
  \ifingather@
    \restorealignstate@
    \egroup
    \nonumber
    \ifnum0=`{\fi\iffalse}\fi
  \else
    $$%
  \fi
  \ignorespacesafterend%
  \vspace{\alignbot}\par\noindent
} \makeatother
\begin{document}
\title{Efficient Relay Selection Scheme for Delay-Limited Non-Orthogonal
Hybrid-ARQ Relay Channels \vspace{-0.3cm}}



\author{
\authorblockN{Behrouz Maham, \emph{Member, IEEE}, Aydin Behnad, \emph{Student Member, IEEE}, and Mérouane Debbah, \emph{Senior Member, IEEE}}\\
    \thanks{
     Behrouz Maham and Aydin Behnad are with School of ECE, College of Engineering, University of Tehran,
North Karegar, Tehran 14395-515, Iran.
     Mérouane Debbah is with Alcatel-Lucent Chair on Flexible Radio, SUP{\'E}LEC,
Gif-sur-Yvette, France. Preliminary version of a portion
    of this work was appeared in \emph{Proc. IEEE Vehicular Technology Conference (VTC 2010-Fall)}.
Emails:
\protect\url{b.maham@ut.ac.ir, behnad@ut.ac.ir, merouane.debbah@supelec.fr}.}%
}

\maketitle

\begin{abstract}

We consider a half-duplex wireless relay network with hybrid-automatic retransmission request
(HARQ) and Rayleigh fading channels. In this paper, we analyze the outage probability of the multi-relay delay-limited HARQ system with opportunistic relaying scheme in decode-and-forward mode, in which the \emph{best} relay is selected to transmit the source's regenerated signal.
A simple and distributed relay selection strategy is proposed for multi-relay HARQ channels. Then, we utilize the non-orthogonal cooperative transmission between the source and selected relay for retransmitting of the source data toward the destination if needed, using space-time codes or beamforming techniques. We analyze the performance of the system. We first derive the cumulative density function (CDF) and probability density function (PDF) of the selected relay HARQ channels.
Then, the CDF and PDF are used to determine the outage probability in the $l$-th round of HARQ. The outage probability is required to compute the throughput-delay
performance of this half-duplex opportunistic relaying protocol. The packet delay constraint is represented by $L$, the maximum number of HARQ rounds. An outage is declared if the packet is unsuccessful after $L$
HARQ rounds. Furthermore, closed-form upper-bounds on outage probability are derived and subsequently are used to investigate the diversity order of the system.
Based on the derived upper-bound expressions, it is shown that the proposed schemes
achieve the full spatial diversity order of $N+1$, where $N$ is the number of potential relays.
Our analytical results are confirmed by
simulation results.
\end{abstract}

\section{Introduction}
Cooperation among devices has been considered
to provide diversity in wireless networks where fading may
significantly affect single links \cite{nos04}. Initial works have emphasized
on relaying, where a cooperator node amplifies (or decodes)
and forwards, possibly in a quantized fashion \cite{cov79}, the information
from the source node in order to help decoding at
the destination node \cite{lan04,mah09twc,mah08nov}. The achieved throughput can be
increased with the integration of cooperation and coding, i.e.,
by letting the cooperator send incremental redundancy to the
destination \cite{hun06}. In particular, it has been shown in \cite{hun06} that
coded cooperation achieves a diversity order of two, while
decode-and-forward reaches only a diversity order one, when
the transmissions of source and cooperator are orthogonal. The
capacity of cooperative networks using both the decode-and-forward
and coded cooperation has been extensively studied
\cite{hun06,kra05} for simple networks with simple medium access control
(MAC) protocols. In \cite{lin06}, a system with two transmission
phases that makes use of convolutional codes is analyzed and
characterized by means of partner choice and performance
regions. Resource allocation for space-time coded cooperative networks
has been studied in \cite{mah09eur, mah11tc}, where the analysis of bit error rate and outage probability are
also derived.
Unfortunately, in cooperative relaying the diversity gain is
increased at the expense of throughput loss due to the half-duplex
constraint at relay nodes. Different methods have been proposed to
recover this loss. In \cite{fan07}, successive relaying using repetition coding
has been introduced for a two relay wireless network with flat fading.
In \cite{tan08}, relay selection methods have been proposed for cooperative
communication with decode-and-forward (DF) relaying.

A prominent alternative to reducing the throughput loss in relay-aided
transmission mechanisms is the combination of both ARQ and
relaying. This approach would significantly reduce the half-duplex
multiplexing loss by activating ARQ for rare erroneously decoded data
packets, when they occur. Approaches targeting the joint design of
ARQ and relaying in one common protocol have recently received more
interest (see for instance \cite{nar08,qi09}). Motivated by the above suggestion,
we investigate and analyze throughput efficient cooperative transmission techniques
where both ARQ and relaying are jointly designed.
To this end,
a diversity effect can be introduced to a relay networks by simply allowing
the nodes to maintain previously received information
concerning each active message. Each time a message is retransmitted,
either from a new node or from
the same node,
every node in the relay network will increase the amount of resolution
information it has about the message. Once a node has
accumulated sufficient information it will be able to decode the
message and can act as a relay and forward the message (as in
decode-and-forward \cite{lan04,mah09iet}). This diversity effect can be viewed
as a space-time generalization of the time-diversity effect of hybrid-automatic repeat request (HARQ) as described in \cite{cai01}. Thus, the HARQ scheme which is used in this paper is a practical approach to designing
wireless ad hoc networks that exploit the spatial diversity, which is achievable with relaying. The retransmitted
packets could originate from any node that has overheard and
successfully decoded the message. Current and future wireless
networks based on packet switching use HARQ protocols at the link layer. Hence, the performance of HARQ protocols in relay channels has attracted recent research interest \cite{tab05,zha05jsac,nar08}.

In this paper, we propose an efficient HARQ multi-relay protocol which leads to full spatial diversity.
We assume a delay-limited network with the maximum number of HARQ rounds $L$, which represents the delay constraint.
The protocol uses a form of incremental redundancy HARQ transmission with assistance from the selected relay via non-orthogonal transmission in the second transmission phase
if the relay decodes the message before the destination. Note that by non-orthogonal transmission, we mean that the source node and the selected relay simultaneously retransmit the source data using space-time codes or beamforming techniques.
We introduce a distributed relay selection scheme for HARQ multi-relay networks by using acknowledgment (ACK) or non-acknowledgment (NACK) signals transmitted by destination.
Closed-form expressions are derived for the outage probability, defined as the probability of packet failure after $L$ HARQ rounds, in half-duplex. For sufficiently high SNR, we derive a simple closed-form average outage probability expression for a HARQ system with multiple cooperating branches, and it is shown that the full diversity is achievable in the proposed HARQ relay networks.
The simulations shows that the throughput of the relay channel
is significantly larger than that of direct transmission for a wide
range of signal-to-noise ratios (SNRs), target outage probabilities,
delay constraints and relay numbers.

The remainder of this paper is organized as follows:
In Section II, the system model and protocol description are given.
The performance analysis
The closed--form expressions for the outage probability and asymptotic analysis of the system are presented in Section III, which are utilized for optimizing the system.
In Section IV, the overall system performance is presented for
different numbers of relays and channel conditions, and the correctness of the analytical
formulas are confirmed by simulation results.
Conclusions are presented in Section V.

%
\emph{Notations}: The superscripts $(\cdot)^t$, $(\cdot)^H$, and $(\cdot)^*$ stand for
transposition, conjugate transposition, and element-wise
conjugation, respectively.
The expectation operation is denoted by $\mathbb{E}\{\cdot\}$.
%
%
The union and intersection of a collection of sets are denoted by $\bigcup$ and $\bigcap$, respectively.
The symbol $|x|$
is the absolute value of the scalar $x$, while
$[x]^+$ denotes $\max\{x,0\}$.
The logarithms $\log_2$ and $\log$ are the based two logarithm and the natural logarithm, respectively.

\section{System Model and Protocol Description}
\begin{figure}[e]
  \centering
  \includegraphics[width=\columnwidth]{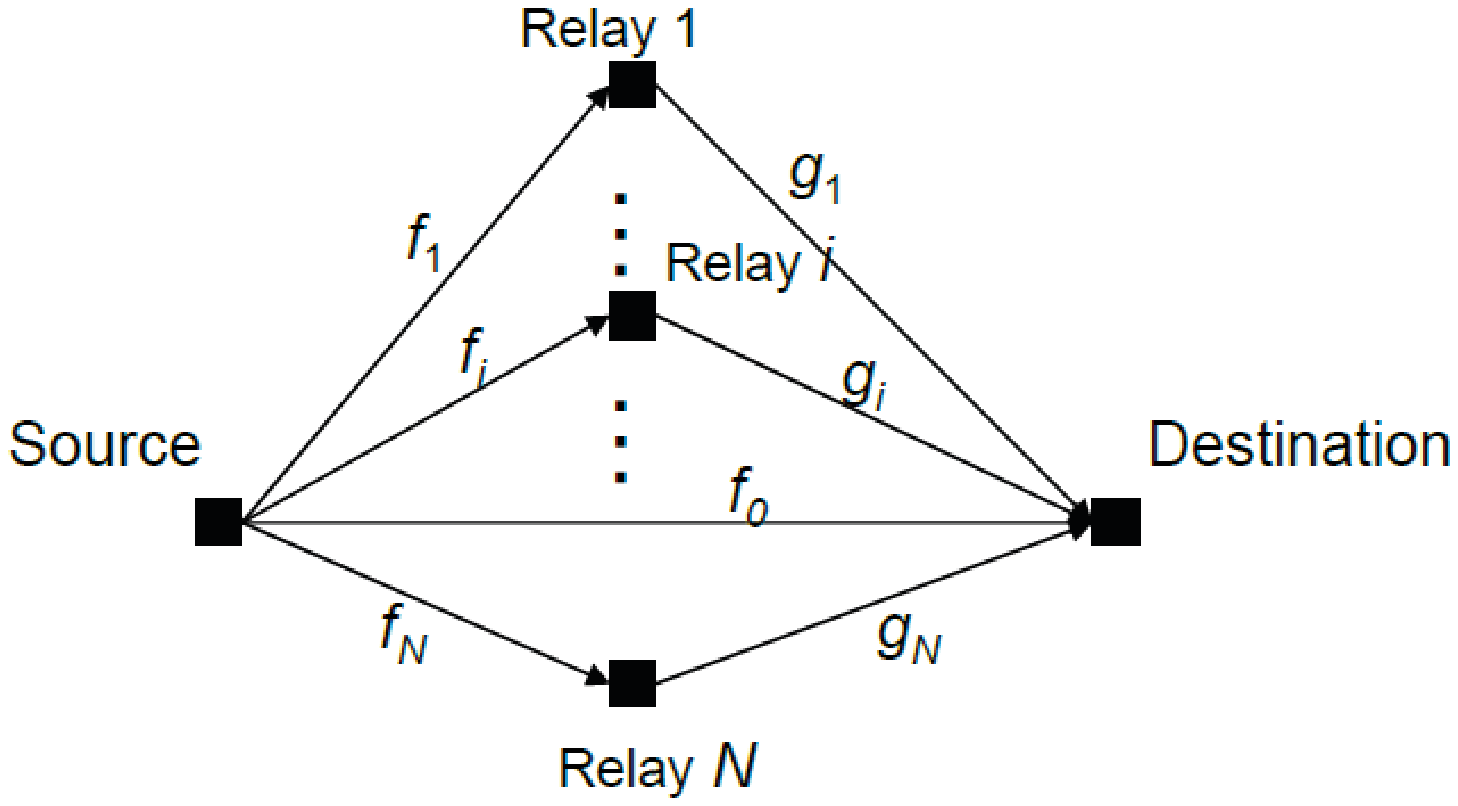}\\
  \caption{Wireless relay network consisting of a source, a destination, and \emph{N} relays.}\label{fa}
\end{figure}
\begin{figure}[e]
  \centering
  \includegraphics[width=\columnwidth]{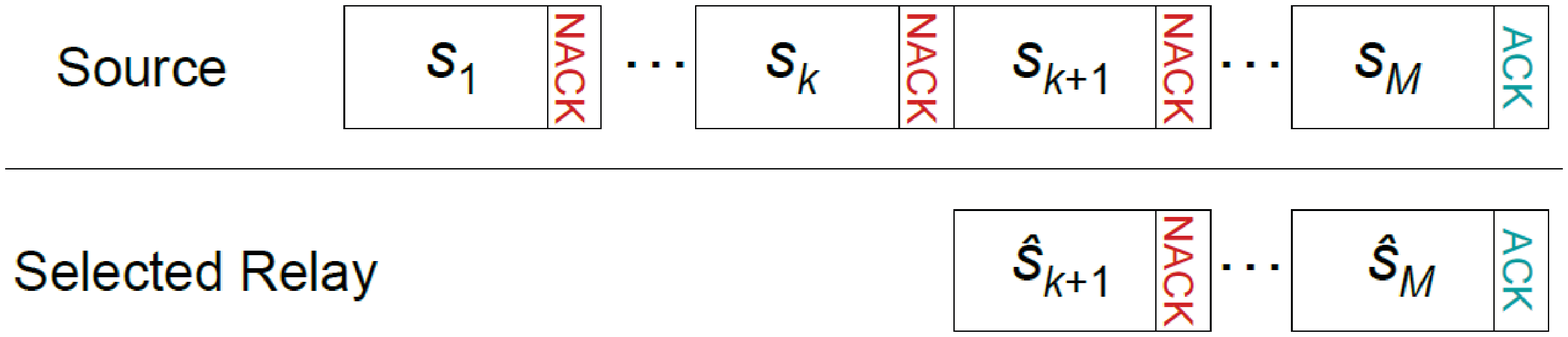}\\
  \caption{Example of HARQ protocol for relay selection system. The selected relay decodes
message after HARQ round $k$. The source and selected relay simultaneously transmit $s_l$ and $\hat{s}_l$, respectively, for all HARQ rounds $l > k$.
In this figure, the
destination decodes the message after HARQ round $M$ where $k<M\leq L$.}\label{f0}
\end{figure}
Consider a network
consisting of a source, one or more relays denoted $i=1, 2, \ldots, N$, and one
destination. The wireless relay network model is illustrated in Fig. \ref{fa}. It is assumed that each node is
equipped with a single antenna. We consider symmetric channels and
denote the source-to-destination, source-to-\emph{i}th relay, and
\emph{i}th relay-to-destination links by $f_0$, $f_i$, and $g_i$,
respectively. Suppose each link has Rayleigh fading, independent of
the others. Therefore, $f_0$, $f_i$, and $g_i$ are i.i.d. complex
Gaussian random variables with zero-mean and variances $\sigma_0^2$,
$\sigma_{f_i}^2$, and $\sigma_{g_i}^2$, respectively. As in \cite{nar08}, all
links are assumed to be long-term quasi-static wherein all
HARQ rounds of a single packet experience a single channel
realization. Subsequent packets experience independent channel
realizations. Note that such an assumption, applicable in
low-mobility environments such as indoor wireless local area
networks (WLANs), clearly reveals the gains due to HARQ
since temporal diversity is not present.

\subsection{Relay Selection Strategy}
In this paper, we use selection relaying, a.k.a. opportunistic relaying \cite{ble06b}, which selects the best relay among $N$ available relays. Inspired by the
distributed algorithm proposed in \cite{ble06b}, which uses request-to-send
(RTS) and clear-to-send (CTS) signals to select the best relay, we propose the following selection procedure for HARQ systems:
\begin{itemize}
\item In the first step, the source node broadcast its packet toward
the relays and the destination. Thus, relays can estimate their
source-to-relay channels.
\item If the destination decodes the packet correctly, the relays would not cooperate. Otherwise, relays exploit the NACK signal which is transmitted by
the destination to estimate their corresponding relay-to-destination channel.
\item The $i$th relay, $i=1,\ldots,N$ has a timer $T_i$ which its value is proportional
to the inverse of $\min\left\{|f_{i}|^2,|g_{i}|^2\right\}$.
\item The relay with maximum amount of
$\min\left\{|f_{i}|^2,|g_{i}|^2\right\}$ has a smallest $T_i$. Whenever
the first relay finished its timer, it broadcasts a flag packet
toward the other relays to make them silent and announce her as the selected relay.
\end{itemize}

Note that the process of selecting the best relay could be also done in a centralized manner by the
destination. This is feasible since the destination node should be
aware of both the backward and forward channels for coherent decoding.
Thus, the same channel information could be exploited for the
purpose of relay selection. After selecting the best relay, a feedback packet containing the index of the best relay should be sent from the destination toward the source and relay nodes.

\subsection{Transmission Strategy}
Let $s$ and $\hat{s}$ denote the transmitted signals from the source and the selected relay,
respectively. As shown in Fig. \ref{f0}, during the first HARQ round,
the relays and destination listen to the source transmit block
$s$. At the end of the transmission, the destination transmits
both the source and relays a one-bit ACK or NACK indicating, respectively, the
success or failure of the transmission. The NACK/ACK is
assumed to be received error-free and with negligible delay. Then, with the procedure given above, the best relay is selected.
As long as NACK is received after each HARQ round and the
maximum number of HARQ rounds is not reached, the source
successively transmits subsequent HARQ blocks of the same packet. As illustrated in Fig. \ref{f0}, suppose the selected relay decodes the message after
HARQ round $k$, while the destination has not yet decoded the message correctly. For all HARQ rounds $l > k$, the source and the selected relay simultaneously transmit $s$ and $\hat{s}$, respectively. For this non-orthogonal transmission, the destination can be benefited from the spatial diversity using the following methods:
\subsubsection{Space-Time Code Transmission}
The Alamouti code can be used to transmit the coded packets, hence, no interference occurs due to the simultaneous transmissions of the source and relay.
The effective coding rate after $l$ HARQ rounds is $R/l$ bps/Hz, where $R$ is the spectral efficiency (in bps/Hz) of the first HARQ round. Let $x$ and $\hat{x}$ denote the Alamouti code transmitted signals from the source and the selected relay, respectively.
The received signal $y$ at the destination can be
written as follows:
\begin{equation}\label{1}
    y=\left\{\begin{array}{cc}
                          f_0 x + g_r \hat{x}+n, & \text{if } l > k, \\
                          f_0 x +n, & \text{if } l \leq k.
                        \end{array}\right.
\end{equation}
where the index $r$ refers to the index of the selected relay and $n$ is a complex white Gaussian noise sample with
variance $N_0$.
\subsubsection{Beamforming Transmission}
An alternative way of simultaneous transmission of the coded packets is beamforming. Assuming the knowledge of channel phases of $f_0$ and $g_r$ at the source and the selected relay $r$, respectively, the transmitted message can be recovered at the destination. Moreover, using beamforming, we can achieve array gain of two, comparing to space-time code usage, in expense of higher data exchange overhead for phase estimation at the transmitters. However, ACK/NACK transmission from the destination can be exploited for the phase estimation of the channels. Thus, no signaling overhead is added when beamforming technique is used. In this case, the
received signal at the destination is given by
\begin{equation}\label{1e}
    y=\left\{\begin{array}{cc}
                          |f_0| s + |g_r| \hat{s}+n, & \text{if } l > k, \\
                          |f_0| s +n, & \text{if } l \leq k.
                        \end{array}\right.
\end{equation}

\subsection{Average Throughput}
Two definitions of throughput are considered. A frequently used metric for throughput analysis is the
long-term (LT) average throughput, given by \cite{elg06}
\begin{equation}\label{b3}
    \bar{G}_{LT}=\frac{R}{\mathbb{E}\{l\}}=\frac{R}{\sum_{l=0}^{L-1}P_{\text{out}}(l)},
\end{equation}
where $\mathbb{E}\{l\}$ is the average number of HARQ rounds spent
transmitting an arbitrary message and $P_{\text{out}}(l)$ denotes the probability that the packet is incorrectly decoded at the destination
after $l$ HARQ rounds. In the next section, we calculate closed-form solutions for the outage probability terms $P_{\text{out}}(l)$ used in \eqref{b3}.
The definition in \eqref{b3} relies on the steady-state behavior of several
message transmissions. During this time, the probabilities
$P_{\text{out}}(l)$ are assumed to be constant. This assumption is
removed by considering the delay-limited
(DL) throughput, which is the throughput of a single packet,
defined by \cite{nar08}
\begin{equation}\label{b2}
    \bar{G}_{DL}=\sum_{l=1}^{L}\frac{R}{l}\left[P_{\text{out}}(l-1)-P_{\text{out}}(l)\right],
\end{equation}
An advantage of definition \eqref{b2}, which does not resort to
long-term behavior, is the ability to track slow time variations
in the channels.

In \cite{zor03a}, the request to an automatic repeat request (ARQ) is served by the relay closest to the destination, among those that have decoded the message. However, distance-dependent relay selection does not consider the fading effect of wireless networks and leads to a maximum diversity of two. Therefore, in this work, the request to an ARQ is served by the relay with the best instantaneous channel conditions. Similar to \cite{ble06b}, we choose the relay with the maximum of $\min\left\{\gamma_{f_i},\gamma_{g_i}\right\}$, $i=1,\ldots,N$, as the best relay, where $\gamma_{f_i}=|f_{i}|^2$ and $\gamma_{g_i}=|g_{i}|^2$. We define
\begin{align}\label{2}
    \gamma_{\max}&\triangleq\min\left\{\gamma_{f_r},\gamma_{g_r}\right\}
    \nonumber\\
    &=\max\left\{\min\left\{\gamma_{f_1},\gamma_{g_1}\right\},\ldots,\min\left\{\gamma_{f_N},\gamma_{g_N}\right\}\right\}
\end{align}
where
\begin{equation}\label{3}
    r=\text{arg}\max_{i=1,\ldots,N}\left\{\min\left\{\gamma_{f_i},\gamma_{g_i}\right\}\right\}.
\end{equation}

\section{Performance Analysis}

In this section, we calculate the outage probability of the HARQ relay selection system proposed in the previous section. Besides achieving a performance metric, outage probability expression is needed in both throughput definitions in \eqref{b3} and \eqref{b2}.

Let $\chi$ denote the earliest HARQ round after which the relay
stops listening to the current message. The outage probability
for the relay channel after $l$ HARQ rounds is given by \cite{tab05}
\begin{align}\label{4}
    P_{\text{out}}(l)=&\sum_{k=1}^{l-1}P_{\text{out}}(l\,|l>k)\,\text{Pr}[\chi=k]
    \nonumber\\
    &+\sum_{k=l}^{L}P_{\text{out}}(l\,|l\leq k)\,\text{Pr}[\chi=k].
\end{align}
To compute $\text{Pr}[\chi=k]$, the mutual information between
source and relay for each HARQ round is given by
\begin{equation}\label{5}
    I_{f_r}=\log_2\left(1+\frac{P}{N_0}\gamma_{f_r}\right),
\end{equation}
where $P$ is the average transmit power from the source and $\gamma_{f_r}$ is an exponentially distributed random variable with mean $\sigma_{f_r}^2$. For $k = 1, \ldots , l - 1$, $\chi=k$ if the message
is successfully decoded by the relay at the $k$th HARQ round, and we have
\begin{align}\label{6}
    \text{Pr}[\chi=k]&=\text{Pr}[(k-1)I_{f_r}<R,\,k \, I_{f_r}>R]
    \nonumber\\
    &=\text{Pr}[(k-1)I_{f_r}<R]-\text{Pr}[k \,I_{f_r}<R]
    \nonumber\\
    &=\text{Pr}[\gamma_{f_r}<\mu_{k-1}]-\text{Pr}[\gamma_{f_r}<\mu_{k}],
\end{align}
where
\begin{equation}\label{7}
    \mu_{k}=\frac{N_0}{P}\left(2^{R/k}-1\right).
\end{equation}

For $k = l, \ldots , L$, $\chi=k$ if the relay did not decode the message successfully after $(l-1)$ HARQ rounds, and thus, we have
\begin{align}\label{8}
    \text{Pr}[\chi=k]&=\text{Pr}[(l-1)I_{f_r}<R]
=\text{Pr}[\gamma_{f_r}<\mu_{l-1}].
\end{align}
From \eqref{6} and \eqref{8}, $\text{Pr}[\chi=k]$ can be calculated as
\begin{align}\label{9}
    \text{Pr}[\chi=k]&=\left\{\begin{array}{cc}
                          \text{Pr}[\gamma_{f_r}<\mu_{k-1}]-\text{Pr}[\gamma_{f_r}<\mu_{k}], & \text{if } k<l, \\
                          \text{Pr}[\gamma_{f_r}<\mu_{l-1}], & \text{if } k\geq l.
                        \end{array}\right.
\end{align}

\subsection{Exact Outage Probability}
Since the index $r$ given in \eqref{3} is dependent on channels, $\gamma_{f_r}$ and $\gamma_{g_r}$ are \emph{not independent} for $N>1$. Thus, obtaining a closed-form for PDF is not straightforward. As it is seen from \eqref{9}, for computing $\text{Pr}[\chi=k]$, the CDF of random variable $\gamma_{f_r}$ is required.
In the following, the CDF of the random variable $\gamma_{f_r}$ is derived.

\begin{proposition}\label{a}
Let $\gamma_{f_{i}}$ and $\gamma_{g_{i}}$, $i=1,\ldots,N$, be independent exponential random variables with means $\sigma^{2}_{f_{i}}$ and $\sigma^{2}_{g_{i}}$, respectively.
The CDF and PDF of $\gamma_{f_r}$, where $r$ is defined as \eqref{3}, are given by \eqref{P1} and \eqref{P2}, respectively.

\begin{align}\label{P1}
    \textnormal{F}_{\gamma_{f_{r}}}(\gamma)&=\prod_{i=1}^{N}\Big(1-e^{-(\frac{1}{\sigma^{2}_{f_{i}}}+\frac{1}{\sigma^{2}_{g_{i}}})\gamma}\Big)-\sum_{j=1}^{N}\frac{1}{\sigma^{2}_{g_{j}}}e^{-\frac{\gamma}{\sigma^{2}_{f_{j}}}}\int_{0}^{\gamma}e^{-\frac{\beta}{\sigma^{2}_{g_{j}}}}\prod_{\underset{i\neq j}{i=1}}^{N}\big(1-e^{-(\frac{1}{\sigma^{2}_{f_{i}}}+\frac{1}{\sigma^{2}_{g_{i}}})\beta}\big)d\beta\hspace{2mm},\hspace{2mm}\gamma \geqslant 0 ,
\end{align}
\begin{align}\label{P2}
   \textnormal{f}_{\gamma_{f_{r}}}(\gamma)&=\sum_{j=1}^{N}\Big(\frac{1}{\sigma^{2}_{f_{j}}}
e^{-(\frac{1}{\sigma^{2}_{f_{j}}}+\frac{1}{\sigma^{2}_{g_{j}}})\gamma}\prod_{\underset{i\neq j}{i=1}}^{N}\big(1-e^{-(\frac{1}{\sigma^{2}_{f_{i}}}+\frac{1}{\sigma^{2}_{g_{i}}})\gamma}\big) \nonumber\\
&\,\,\,+\frac{1}{\sigma^{2}_{f_{j}}\sigma^{2}_{g_{j}}}e^{-\frac{\gamma}{\sigma^{2}_{f_{j}}}}\int_{0}^{\gamma}e^{-\frac{\beta}{\sigma^{2}_{g_{j}}}}\prod_{\underset{i\neq j}{i=1}}^{N}\big(1-e^{-(\frac{1}{\sigma^{2}_{f_{i}}}+\frac{1}{\sigma^{2}_{g_{i}}})\beta}\big)d\beta\Big)\hspace{2mm},\hspace{2mm}\gamma \geqslant 0.
\end{align}
\end{proposition}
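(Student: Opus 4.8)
The plan is to reduce the whole computation to the auxiliary variables $m_i\triangleq\min\{\gamma_{f_i},\gamma_{g_i}\}$, $i=1,\dots,N$. Since the $\gamma_{f_i},\gamma_{g_i}$ are independent exponentials, the $m_i$ are mutually independent and exponentially distributed with parameter $\lambda_i\triangleq 1/\sigma_{f_i}^2+1/\sigma_{g_i}^2$, so $\mathrm F_{m_i}(x)=1-e^{-\lambda_i x}$ and $\mathrm f_{m_i}(x)=\lambda_i e^{-\lambda_i x}$ for $x\ge 0$. In this notation the selected index of \eqref{3} is $r=\arg\max_i m_i$, hence $\{r=j\}=\bigcap_{i\ne j}\{m_i<m_j\}$ (ties happen with probability zero and are ignored throughout).

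First I would apply the law of total probability over the identity of the winning relay, $\mathrm F_{\gamma_{f_r}}(\gamma)=\sum_{j=1}^N\Pr\{\gamma_{f_r}\le\gamma,\,r=j\}$, using that $\gamma_{f_r}=\gamma_{f_j}$ on $\{r=j\}$. The delicate point — and the main obstacle — is that $\gamma_{f_j}$ and $m_j$ are \emph{not} independent: they coincide when $\gamma_{f_j}\le\gamma_{g_j}$, whereas if $\gamma_{f_j}>\gamma_{g_j}$ then $m_j=\gamma_{g_j}<\gamma_{f_j}$. Consequently, on $\{r=j\}$ the events $\{\gamma_{f_j}\le\gamma\}$ and $\{m_j\le\gamma\}$ differ only on the set $\{\gamma_{g_j}\le\gamma<\gamma_{f_j}\}$, on which necessarily $\gamma_{f_j}>\gamma_{g_j}$ and $m_j=\gamma_{g_j}$. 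This yields the clean decomposition
\[
\Pr\{\gamma_{f_r}\le\gamma,\,r=j\}=\Pr\{m_j\le\gamma,\,r=j\}-\Pr\{\gamma_{g_j}\le\gamma<\gamma_{f_j},\,r=j\}.
\]

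For the first term on the right I would condition on $m_j=\beta$ and use independence of the $m_i$, giving $\Pr\{m_j\le\gamma,\,r=j\}=\int_0^\gamma \mathrm f_{m_j}(\beta)\prod_{i\ne j}\mathrm F_{m_i}(\beta)\,d\beta$. Summing over $j$ and recognising the integrand as $\frac{d}{d\beta}\prod_{i=1}^N\mathrm F_{m_i}(\beta)$ telescopes the sum to $\prod_{i=1}^N\mathrm F_{m_i}(\gamma)=\prod_{i=1}^N(1-e^{-\lambda_i\gamma})$, which is the leading term of \eqref{P1}. For the second term, $\gamma_{f_j}$ is independent of $\gamma_{g_j}$ and of all $m_i$ with $i\ne j$, so I would factor out $\Pr\{\gamma_{f_j}>\gamma\}=e^{-\gamma/\sigma_{f_j}^2}$ and then condition on $\gamma_{g_j}=\beta$, obtaining $\frac{1}{\sigma_{g_j}^2}e^{-\gamma/\sigma_{f_j}^2}\int_0^\gamma e^{-\beta/\sigma_{g_j}^2}\prod_{i\ne j}(1-e^{-\lambda_i\beta})\,d\beta$; summing over $j$ produces the subtracted term in \eqref{P1}.

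Finally, \eqref{P2} follows by differentiating \eqref{P1} with respect to $\gamma$. Differentiating the product contributes $\sum_j\lambda_j e^{-\lambda_j\gamma}\prod_{i\ne j}(1-e^{-\lambda_i\gamma})$; applying Leibniz's rule to the $j$-th integral term contributes $-\frac{1}{\sigma_{g_j}^2}e^{-\lambda_j\gamma}\prod_{i\ne j}(1-e^{-\lambda_i\gamma})+\frac{1}{\sigma_{f_j}^2\sigma_{g_j}^2}e^{-\gamma/\sigma_{f_j}^2}\int_0^\gamma e^{-\beta/\sigma_{g_j}^2}\prod_{i\ne j}(1-e^{-\lambda_i\beta})\,d\beta$, and the $-1/\sigma_{g_j}^2$ pieces combine with $\lambda_j=1/\sigma_{f_j}^2+1/\sigma_{g_j}^2$ to leave exactly the coefficient $1/\sigma_{f_j}^2$ appearing in \eqref{P2}. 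Since every integrand is a finite sum of exponentials, the remaining integrals in \eqref{P1}--\eqref{P2} are elementary and require no further comment.
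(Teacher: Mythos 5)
Your proof is correct and follows essentially the same route as the paper's Appendix I: the same auxiliary minima $m_i$, the same partition over the winning index $r=j$, the same split $\Pr\{\gamma_{f_j}\le\gamma,\,r=j\}=\Pr\{m_j\le\gamma,\,r=j\}-\Pr\{m_j\le\gamma,\,\gamma_{f_j}>\gamma,\,r=j\}$, and differentiation for \eqref{P2}. The only cosmetic difference is that you telescope the sum of integrals to obtain $\prod_i\mathrm F_{m_i}(\gamma)$, whereas the paper identifies that sum directly as $\Pr\{\max_i m_i\le\gamma\}$; you also carry out the Leibniz-rule cancellation for the PDF more explicitly than the paper does.
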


\begin{proof}
The proof is given in Appendix I. 
\end{proof}

It is noteworthy that the integrals in \eqref{P1} and \eqref{P2} can be easily calculated, as all terms in the expansion of integrands  are of the exponential form.

\begin{corollary}
When $\sigma^{2}_{f_{i}}=\sigma^{2}_{f}$ and $\sigma^{2}_{g_{i}}=\sigma^{2}_{g}$ for all $i=1,2,...,N$, CDF and PDF of $\gamma_{f_{r}}$ are respectively simplified as follows:
\begin{align}\label{P3}
\textnormal{F}_{\gamma_{f_{r}}}(\gamma)&=(1-e^{-(\frac{1}{\sigma^{2}_{f}}+\frac{1}{\sigma^{2}_{g}})\gamma}\big)^{N}\nonumber \\
&-\frac{N\sigma^{2}_{f}}{\sigma^{2}_{f}+{\sigma^{2}_{g}}}e^{-\frac{\gamma}{\sigma^{2}_{f}}}\,\mathcal{B}\!\left(1-e^{-(\frac{1}{\sigma^{2}_{f}}+\frac{1}{\sigma^{2}_{g}})\gamma};N,\frac{\sigma^{2}_{f}}{\sigma^{2}_{f}+{\sigma^{2}_{g}}}\right) \\
\textnormal{f}_{\gamma_{f_{r}}}(\gamma)&=\frac{N}{\sigma^{2}_{f}}e^{-(\frac{1}{\sigma^{2}_{f}}+\frac{1}{\sigma^{2}_{g}})\gamma}(1-e^{-(\frac{1}{\sigma^{2}_{f}}+\frac{1}{\sigma^{2}_{g}})\gamma}\big)^{N-1}\nonumber \\
&+\frac{N}{\sigma^{2}_{f}+{\sigma^{2}_{g}}}e^{-\frac{\gamma}{\sigma^{2}_{f}}}\,\mathcal{B}\!\left(1-e^{-(\frac{1}{\sigma^{2}_{f}}+\frac{1}{\sigma^{2}_{g}})\gamma};N,\frac{\sigma^{2}_{f}}{\sigma^{2}_{f}+{\sigma^{2}_{g}}}\right)
\end{align}
where $\mathcal{B}(x;a,b)=\int_{0}^{x}t^{a-1}(1-t)^{b-1}dt$ is the incomplete beta function \cite{gra96}.
\end{corollary}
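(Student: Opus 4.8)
The plan is to specialize the general formulas \eqref{P1} and \eqref{P2} of Proposition~\ref{a} to the symmetric case, so that no new probabilistic argument is needed. Set $c\triangleq\tfrac{1}{\sigma^{2}_{f}}+\tfrac{1}{\sigma^{2}_{g}}$. Under $\sigma^{2}_{f_{i}}=\sigma^{2}_{f}$ and $\sigma^{2}_{g_{i}}=\sigma^{2}_{g}$ for every $i$, each factor $1-e^{-(1/\sigma^{2}_{f_{i}}+1/\sigma^{2}_{g_{i}})\gamma}$ becomes $1-e^{-c\gamma}$, so $\prod_{i=1}^{N}(\cdot)$ collapses to $(1-e^{-c\gamma})^{N}$ and each $\prod_{i\neq j}(\cdot)$ collapses to $(1-e^{-c\gamma})^{N-1}$ (respectively $(1-e^{-c\beta})^{N-1}$ inside the integrals). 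Furthermore all $N$ summands over $j$ become identical, so $\sum_{j=1}^{N}$ merely multiplies the common summand by $N$. Hence \eqref{P1} becomes $\textnormal{F}_{\gamma_{f_{r}}}(\gamma)=(1-e^{-c\gamma})^{N}-\tfrac{N}{\sigma^{2}_{g}}e^{-\gamma/\sigma^{2}_{f}}\,J(\gamma)$ with $J(\gamma)\triangleq\int_{0}^{\gamma}e^{-\beta/\sigma^{2}_{g}}(1-e^{-c\beta})^{N-1}\,d\beta$, and \eqref{P2} reduces to $\tfrac{N}{\sigma^{2}_{f}}e^{-c\gamma}(1-e^{-c\gamma})^{N-1}+\tfrac{N}{\sigma^{2}_{f}\sigma^{2}_{g}}e^{-\gamma/\sigma^{2}_{f}}\,J(\gamma)$.

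The only computation with any content is to express $J(\gamma)$ through the incomplete beta function $\mathcal{B}(x;a,b)=\int_{0}^{x}t^{a-1}(1-t)^{b-1}\,dt$. I would substitute $t=1-e^{-c\beta}$, so that $dt=c\,e^{-c\beta}\,d\beta=c(1-t)\,d\beta$, and moreover $e^{-\beta/\sigma^{2}_{g}}=(e^{-c\beta})^{(1/\sigma^{2}_{g})/c}=(1-t)^{\sigma^{2}_{f}/(\sigma^{2}_{f}+\sigma^{2}_{g})}$, using the identity $\tfrac{1/\sigma^{2}_{g}}{c}=\tfrac{\sigma^{2}_{f}}{\sigma^{2}_{f}+\sigma^{2}_{g}}$. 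This turns $J(\gamma)$ into $\tfrac{1}{c}\int_{0}^{1-e^{-c\gamma}}t^{N-1}(1-t)^{\frac{\sigma^{2}_{f}}{\sigma^{2}_{f}+\sigma^{2}_{g}}-1}\,dt=\tfrac{1}{c}\,\mathcal{B}\!\left(1-e^{-c\gamma};N,\tfrac{\sigma^{2}_{f}}{\sigma^{2}_{f}+\sigma^{2}_{g}}\right)$.

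Finally I would collapse the leftover constants using $\tfrac{1}{\sigma^{2}_{g}c}=\tfrac{\sigma^{2}_{f}}{\sigma^{2}_{f}+\sigma^{2}_{g}}$ and $\tfrac{1}{\sigma^{2}_{f}\sigma^{2}_{g}c}=\tfrac{1}{\sigma^{2}_{f}+\sigma^{2}_{g}}$, which turn the two expressions above exactly into \eqref{P3} and the stated PDF. As a sanity check one may differentiate the simplified CDF and recover the simplified PDF, since $\tfrac{d}{d\gamma}\mathcal{B}\!\left(1-e^{-c\gamma};N,q\right)=c\,e^{-c\gamma}(1-e^{-c\gamma})^{N-1}(e^{-c\gamma})^{q-1}$ with $q=\tfrac{\sigma^{2}_{f}}{\sigma^{2}_{f}+\sigma^{2}_{g}}$. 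There is no genuine obstacle here; the only thing to watch is the bookkeeping of the reciprocals $1/\sigma^{2}_{f}$ and $1/\sigma^{2}_{g}$, and the mildly counterintuitive fact that it is $\sigma^{2}_{f}$, not $\sigma^{2}_{g}$, that appears in the second argument of $\mathcal{B}$.
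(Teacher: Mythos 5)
Your proposal is correct and is exactly the derivation the paper intends: the corollary is stated without proof as a direct specialization of Proposition~1, and your substitution $t=1-e^{-c\beta}$ with the identities $\tfrac{1}{c\sigma^{2}_{g}}=\tfrac{\sigma^{2}_{f}}{\sigma^{2}_{f}+\sigma^{2}_{g}}$ and $\tfrac{1}{c\sigma^{2}_{f}\sigma^{2}_{g}}=\tfrac{1}{\sigma^{2}_{f}+\sigma^{2}_{g}}$ reproduces both displayed formulas, with the differentiation check confirming consistency. No gaps.
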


\begin{corollary}
For high values of SNR, i.e. when $\gamma=\mu_{k}\rightarrow 0$, the closed form solution for \eqref{P1} can be obtained as
\begin{equation}
\textnormal{F}_{\gamma_{f_{r}}}(\gamma)\cong \gamma^{N}\big(\prod_{i=1}^{N}(\frac{1}{\sigma^{2}_{f_{i}}}+\frac{1}{\sigma^{2}_{g_{i}}})\big)\big(\frac{1}{N}\sum_{i=1}^{N}\frac{\sigma^{2}_{g_{i}}}{\sigma^{2}_{f_{i}}+\sigma^{2}_{g_{i}}}\big) .
\end{equation}
\end{corollary}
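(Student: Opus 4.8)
The plan is to substitute the exact CDF \eqref{P1} and keep only the lowest-order term in $\gamma$ as $\gamma\to0$. For brevity write $a_i\triangleq\frac{1}{\sigma^{2}_{f_{i}}}+\frac{1}{\sigma^{2}_{g_{i}}}$. First I would expand the product term: since $1-e^{-a_i\gamma}=a_i\gamma+O(\gamma^{2})$, multiplying the $N$ factors gives $\prod_{i=1}^{N}\big(1-e^{-a_i\gamma}\big)=\gamma^{N}\prod_{i=1}^{N}a_i+O(\gamma^{N+1})$.

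Next I would handle each of the $N$ integral terms. Inside $\int_{0}^{\gamma}e^{-\beta/\sigma^{2}_{g_{j}}}\prod_{i\neq j}\big(1-e^{-a_i\beta}\big)\,d\beta$ I use $e^{-\beta/\sigma^{2}_{g_{j}}}=1+O(\beta)$ and $\prod_{i\neq j}\big(1-e^{-a_i\beta}\big)=\beta^{N-1}\prod_{i\neq j}a_i+O(\beta^{N})$, so the integral equals $\frac{\gamma^{N}}{N}\prod_{i\neq j}a_i+O(\gamma^{N+1})$. Multiplying by $\frac{1}{\sigma^{2}_{g_{j}}}e^{-\gamma/\sigma^{2}_{f_{j}}}=\frac{1}{\sigma^{2}_{g_{j}}}+O(\gamma)$ and summing over $j$ shows the subtracted term in \eqref{P1} equals $\frac{\gamma^{N}}{N}\sum_{j=1}^{N}\frac{1}{\sigma^{2}_{g_{j}}}\prod_{i\neq j}a_i+O(\gamma^{N+1})$.

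Subtracting the two contributions and using $\prod_{i\neq j}a_i=\big(\prod_{i=1}^{N}a_i\big)/a_j$ yields
\[
\textnormal{F}_{\gamma_{f_{r}}}(\gamma)=\gamma^{N}\Big(\prod_{i=1}^{N}a_i\Big)\Big(1-\frac{1}{N}\sum_{j=1}^{N}\frac{1}{a_j\sigma^{2}_{g_{j}}}\Big)+O(\gamma^{N+1}).
\]
The closing step is the elementary identity $\frac{1}{a_j\sigma^{2}_{g_{j}}}=\frac{\sigma^{2}_{f_{j}}}{\sigma^{2}_{f_{j}}+\sigma^{2}_{g_{j}}}$, together with $\frac{1}{N}\sum_{j=1}^{N}1=1$, which converts $1-\frac{1}{N}\sum_{j}\frac{\sigma^{2}_{f_{j}}}{\sigma^{2}_{f_{j}}+\sigma^{2}_{g_{j}}}$ into $\frac{1}{N}\sum_{j}\frac{\sigma^{2}_{g_{j}}}{\sigma^{2}_{f_{j}}+\sigma^{2}_{g_{j}}}$, giving exactly the claimed expression.

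The only point that needs care — the nearest thing to an obstacle — is confirming that nothing of order lower than $\gamma^{N}$ survives and that the two $\gamma^{N}$-contributions do not fully cancel. The product term manifestly starts at $\gamma^{N}$; each integral term is $O(\gamma^{N})$ because its integrand is $O(\beta^{N-1})$; and since $\frac{1}{\sigma^{2}_{g_{j}}}<a_j$ for every $j$, the bracket $1-\frac{1}{N}\sum_j\frac{1}{a_j\sigma^{2}_{g_{j}}}$ is strictly positive, consistent with $\textnormal{F}_{\gamma_{f_{r}}}$ being a CDF exhibiting diversity order $N$. Everything else is routine bookkeeping with the expansion $e^{-x}=1-x+O(x^{2})$, and the result can be cross-checked against the equal-variance Corollary via the small-argument behavior $\mathcal{B}(x;N,b)\sim x^{N}/N$.
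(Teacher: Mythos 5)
Your derivation is correct and is exactly the route the paper implicitly intends (the corollary is stated without proof as a direct small-$\gamma$ expansion of \eqref{P1}): the leading $\gamma^{N}$ terms of the product and of each integral summand, combined via $\prod_{i\neq j}a_i=(\prod_i a_i)/a_j$ and the identity $\frac{1}{a_j\sigma^2_{g_j}}=\frac{\sigma^2_{f_j}}{\sigma^2_{f_j}+\sigma^2_{g_j}}$, give precisely the stated coefficient, and your positivity check and the consistency check against the equal-variance corollary via $\mathcal{B}(x;N,b)\sim x^{N}/N$ are both sound. Nothing is missing.
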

\vspace{.5cm}
From \eqref{P1}, $\text{Pr}[\chi=k]$ in \eqref{9} can be written as
\begin{align}\label{9f}
    \text{Pr}[\chi=k]&=\left\{\begin{array}{cc}
                          \textnormal{F}_{\gamma_{f_{r}}}(\mu_{k-1})-\textnormal{F}_{\gamma_{f_{r}}}(\mu_{k}), & \text{if } k<l, \\
                          \textnormal{F}_{\gamma_{f_{r}}}(\mu_{l-1}), & \text{if } k\geq l.
                        \end{array}\right.
\end{align}

Next, the conditional probabilities $P_{\text{out}}(l\,|l> k)$ and $P_{\text{out}}(l\,|l\leq k)$ in \eqref{6} will be calculated.
After correct decoding of the source packet at the relay, the relay helps the source by simultaneous transmission
according to the Alamouti code. Hence, assuming the relay transmits the same power $P$ as the source, the mutual information of the effective channel is given by
\begin{equation}\label{17}
    I_{s,r,d}=\log_2\left(1+\frac{P}{N_0}\gamma_{f_0}+\frac{P}{N_0}\gamma_{g_r}\right).
\end{equation}
Let $I_{\text{tot},k,l}$ denote the total mutual information
accumulated at the destination after $l$ HARQ rounds and when
$\chi=k$. For $k < l$, the relay listens for $k$ HARQ rounds and
transmits the message simultaneously with the source using
the Alamouti code for the remaining $(l - k)$ HARQ rounds.
For $k \geq l$, the relay does not help the source during the $l$ HARQ rounds. Hence,
\begin{align}\label{18}
    I_{\text{tot},k,l}&=\left\{\begin{array}{cc}
                          k\,I_{f_0}+(l-k)\,I_{s,r,d}, & \text{if } k=1, \ldots , l - 1, \\
                          l I_{f_0}, & \text{if } k = l, \ldots , L,
                        \end{array}\right.
\end{align}
where $I_{f_0}$ is the mutual information between the source and destination at each HARQ round and can be written as
    $I_{f_0}=\log_2\left(1+\frac{P}{N_0}\gamma_{f_0}\right)$.

Therefore, for $k \geq l$, we have
\begin{align}\label{20}
    P_{\text{out}}(l\,|l\leq k)&=\text{Pr}[l I_{f_0}<R]
    =  1-\exp\left(\frac{-\mu_{l}}{\sigma_{f_0}^2}\right).
\end{align}


From \eqref{18}, the conditional probability $P_{\text{out}}(l\,|l> k)$ can be calculated as
\begin{align}\label{21}
    &P_{\text{out}}(l\,|l> k)=\text{Pr}[I_{\text{tot},k,l}<R]
    \nonumber\\
    &=\text{Pr}\!\left\{\log_2\!\left[\!\left(1\!+\!\frac{P}{N_0}\gamma_{f_0}\right)^{\!k}
    \!\left(1\!+\!\frac{P}{N_0}\gamma_{f_0}\!+\!\frac{P}{N_0}\gamma_{g_r}\right)^{\!l-k}\!\right]\!<R\!\right\}
    \nonumber\\
    &=\text{Pr}\left\{\gamma_{g_r}<\frac{2^{R/(l-k)}}{\frac{P}{N_0}\,\left(1+\frac{P}{N_0}\gamma_{f_0}\right)
    ^{k/(l-k)}}-\gamma_{f_0}
    -\frac{N_0}{P}\right\}
    \nonumber\\
    &=\int_{\gamma_{f_0}=0}^{\mu_l}\int_{\gamma_{g_r}=0}^{\beta(\gamma_{f_0})}
    \frac{e^{-\frac{\gamma_{f_0}}{\sigma^2_{f_0}}}}{\sigma^2_{f_0}}\, \textnormal{f}_{\gamma_{g_r}}(\gamma_{g_r})\, d\gamma_{f_0}\,d\gamma_{g_r}\triangleq \Upsilon(l,k),
\end{align}
where $\beta(\gamma_{f_0})=\frac{2^{R/(l-k)}N_0}{P\left(1+\frac{P}{N_0}\gamma_{f_0}\right)^{k/(l-k)}}-\gamma_{f_0}-\frac{N_0}{P}$.
Due to symmetry, the PDF of random variable $\gamma_{g_r}$, i.e.,
$\textnormal{f}_{\gamma_{g_r}}(\gamma)$, is same as the PDF of random variable $\gamma_{f_r}$, with perhaps different mean. Thus, the PDF of $\gamma_{g_r}$ can be found by the derivation of
$\text{Pr}\{\gamma_{g_r}<\gamma\}$ in \eqref{13} as
\begin{align}\label{22q}
    &\textnormal{f}_{\gamma_{g_r}}(\gamma)=\sum_{j=1}^{N}\Big(\frac{1}{\sigma^{2}_{g_{j}}}
e^{-(\frac{1}{\sigma^{2}_{f_{j}}}+\frac{1}{\sigma^{2}_{g_{j}}})\gamma}\prod_{\underset{i\neq j}{i=1}}^{N}\big(1-e^{-(\frac{1}{\sigma^{2}_{f_{i}}}+\frac{1}{\sigma^{2}_{g_{i}}})\gamma}\big) \nonumber\\
&\,\,\,+\frac{1}{\sigma^{2}_{f_{j}}\sigma^{2}_{g_{j}}}e^{-\frac{\gamma}{\sigma^{2}_{g_{j}}}}
\int_{0}^{\gamma}e^{-\frac{\beta}{\sigma^{2}_{f_{j}}}}\prod_{\underset{i\neq j}{i=1}}^{N}\big(1-e^{-(\frac{1}{\sigma^{2}_{f_{i}}}+\frac{1}{\sigma^{2}_{g_{i}}})\beta}\big)d\beta\Big)\hspace{2mm},\hspace{2mm}\gamma \geqslant 0.
\end{align}
By substituting $\textnormal{f}_{\gamma_{g_r}}(\gamma)$ from \eqref{P2} into
\eqref{21}, $P_{\text{out}}(l\,|l> k)$ is
obtained. Therefore, using \eqref{9}, \eqref{20}, and \eqref{21},
the outage probability in the $l$th stage of HARQ process can be
achieved as
\begin{align}\label{21o}
    &P_{\text{out}}(l)= \sum_{k=1}^{l-1}(\textnormal{F}_{\gamma_{f_{r}}}(\mu_{k-1})-\textnormal{F}_{\gamma_{f_{r}}}(\mu_{k}))\Upsilon(l,k)
    +\sum_{k=l}^{L}\textnormal{F}_{\gamma_{f_{r}}}(\mu_{l-1})\left(1-e^{\frac{-\mu_l}{\sigma_{f_0}^2}}\right),
\end{align}
where $\Upsilon(l,k)$ is defined in \eqref{21}.

\subsection{Approximate Outage Probability}
In the previous subsection, we were able to derive the outage probability in the $l$-th round of HARQ. However, a 
triple integral should be solved to get $\Upsilon(l,k)$, and also $\textnormal{F}_{\gamma_{f_{r}}}(\mu_{k-1})$ is in an integral form. To give an insight on the diversity and find ways of optimizing the system, in this subsection, we try to find a simpler solution for the outage probability.
In the following, an approximation of the CDF of the random variable $\gamma_{f_r}$ is derived.

\begin{proposition}\label{a}
Let $\gamma_{f_i}$ and $\gamma_{g_i}$, $i=1,\ldots,N$, be set of independence exponential random variables
with mean $\sigma^2_{f_i}=\sigma^2_{g_i}=\sigma^2_{i}$.
The cumulative density function of $\gamma_{f_r}$, where $r$ is defined as \eqref{3}, can be approximated as
\begin{align}\label{14q}
    &\textnormal{F}_{\gamma_{f_{r}}}(\gamma)
    \approx 1-\sqrt{1-\prod_{i=1}^N\left(1-e^{-\frac{2\gamma}{\sigma^2_{i}}}\right)}.
\end{align}
\end{proposition}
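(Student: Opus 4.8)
The plan is to bypass the exact but unwieldy integral representation \eqref{P1} and instead combine the equal-variance symmetry of the model with the exact law of the order statistic $\gamma_{\max}$, closing the argument with a single, well-localized independence approximation.

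First I would record the exact distribution of $\gamma_{\max}$. Under the hypothesis $\sigma^2_{f_i}=\sigma^2_{g_i}=\sigma^2_i$, the pair $(\gamma_{f_i},\gamma_{g_i})$ consists of i.i.d.\ exponentials with mean $\sigma^2_i$, so $m_i\triangleq\min\{\gamma_{f_i},\gamma_{g_i}\}$ is exponential with rate $2/\sigma^2_i$; the $m_i$ are mutually independent, and by \eqref{2} $\gamma_{\max}=\max_i m_i$, whence
\begin{equation}
\textnormal{F}_{\gamma_{\max}}(\gamma)=\prod_{i=1}^N\left(1-e^{-2\gamma/\sigma^2_i}\right),\qquad\gamma\geq 0 .
\end{equation}
Next, \eqref{2} also gives $\gamma_{\max}=\min\{\gamma_{f_r},\gamma_{g_r}\}$, so $1-\textnormal{F}_{\gamma_{\max}}(\gamma)=\Pr\{\gamma_{f_r}>\gamma,\ \gamma_{g_r}>\gamma\}$. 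Relabelling $f_i\leftrightarrow g_i$ for every $i$ leaves the selection rule \eqref{3} and the joint law of all channel gains invariant (because $\sigma^2_{f_i}=\sigma^2_{g_i}$) while mapping $\gamma_{f_r}\mapsto\gamma_{g_r}$, so $\gamma_{f_r}$ and $\gamma_{g_r}$ are identically distributed. The approximation is to treat them as independent as well, which gives $\bigl(1-\textnormal{F}_{\gamma_{f_r}}(\gamma)\bigr)^2\approx 1-\prod_{i=1}^N\left(1-e^{-2\gamma/\sigma^2_i}\right)$; taking the nonnegative square root and solving for $\textnormal{F}_{\gamma_{f_r}}(\gamma)$ yields \eqref{14q}.

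The only non-routine point is justifying the independence of $\gamma_{f_r}$ and $\gamma_{g_r}$, which is genuinely false: they are coupled through the argmax index $r$ of \eqref{3}, since conditioned on $r=j$ exactly one of them equals $m_j$ while the other exceeds it. I would argue the coupling is benign in two ways. First, for $N=1$ the two gains are independent and \eqref{14q} collapses to $1-\sqrt{e^{-2\gamma/\sigma^2_1}}=1-e^{-\gamma/\sigma^2_1}$, the exact marginal of $\gamma_{f_1}$, so the approximation is exact at $N=1$ and can only degrade as $N$ grows. Second, expanding \eqref{14q} as $\gamma\to0$ gives $\textnormal{F}_{\gamma_{f_r}}(\gamma)\approx 2^{N-1}\gamma^N/\prod_{i=1}^N\sigma^2_i$, which is exactly the leading term obtained by specializing the preceding high-SNR corollary to $\sigma^2_{f_i}=\sigma^2_{g_i}=\sigma^2_i$; since in the sequel $\textnormal{F}_{\gamma_{f_r}}$ is evaluated only at the vanishingly small arguments $\gamma=\mu_k$ (high SNR), the approximation is asymptotically tight precisely in the regime where it is used. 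Residual accuracy at moderate SNR and $N\geq 2$ would then be checked against the exact formula \eqref{P1} and Monte-Carlo simulation in Section IV.
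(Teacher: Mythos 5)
Your argument is essentially identical to the paper's proof in Appendix II: both derive the exact CDF of $\gamma_{\max}=\min\{\gamma_{f_r},\gamma_{g_r}\}$ as $\prod_{i=1}^N\bigl(1-e^{-2\gamma/\sigma^2_i}\bigr)$, invoke equal statistics of $\gamma_{f_r}$ and $\gamma_{g_r}$, approximate them as independent, and take the square root. Your additional checks (exactness at $N=1$, agreement of the $\gamma\to 0$ leading term $2^{N-1}\gamma^N/\prod_i\sigma^2_i$ with Corollary 2) go beyond what the paper offers in justifying the independence heuristic, but the core route is the same.
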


\begin{proof}
The proof is given in Appendix II. 
\end{proof}
In Fig. \ref{f1}, we have compared the approximated PDF of $\gamma_{f_r}$, which is obtained by the derivation of CDF in \eqref{14q}, with the simulated PDF of $\gamma_{f_r}$. As it can be seen from Fig. \ref{f1}, for the case of a single-relay network ($N=1$), the analytical and simulated results have the same performance. This is because of the fact that the independence assumption for $\gamma_{f_i}$ and $\gamma_{g_i}$ becomes valid for $N=1$, and the approximation in \eqref{14q} turns into equality. For the opportunistic relaying case, i.e., $N>1$, it can be seen that the analytical curves appropriately approximate the simulation result.
\begin{figure}[e]
  \centering
  \includegraphics[width=\columnwidth]{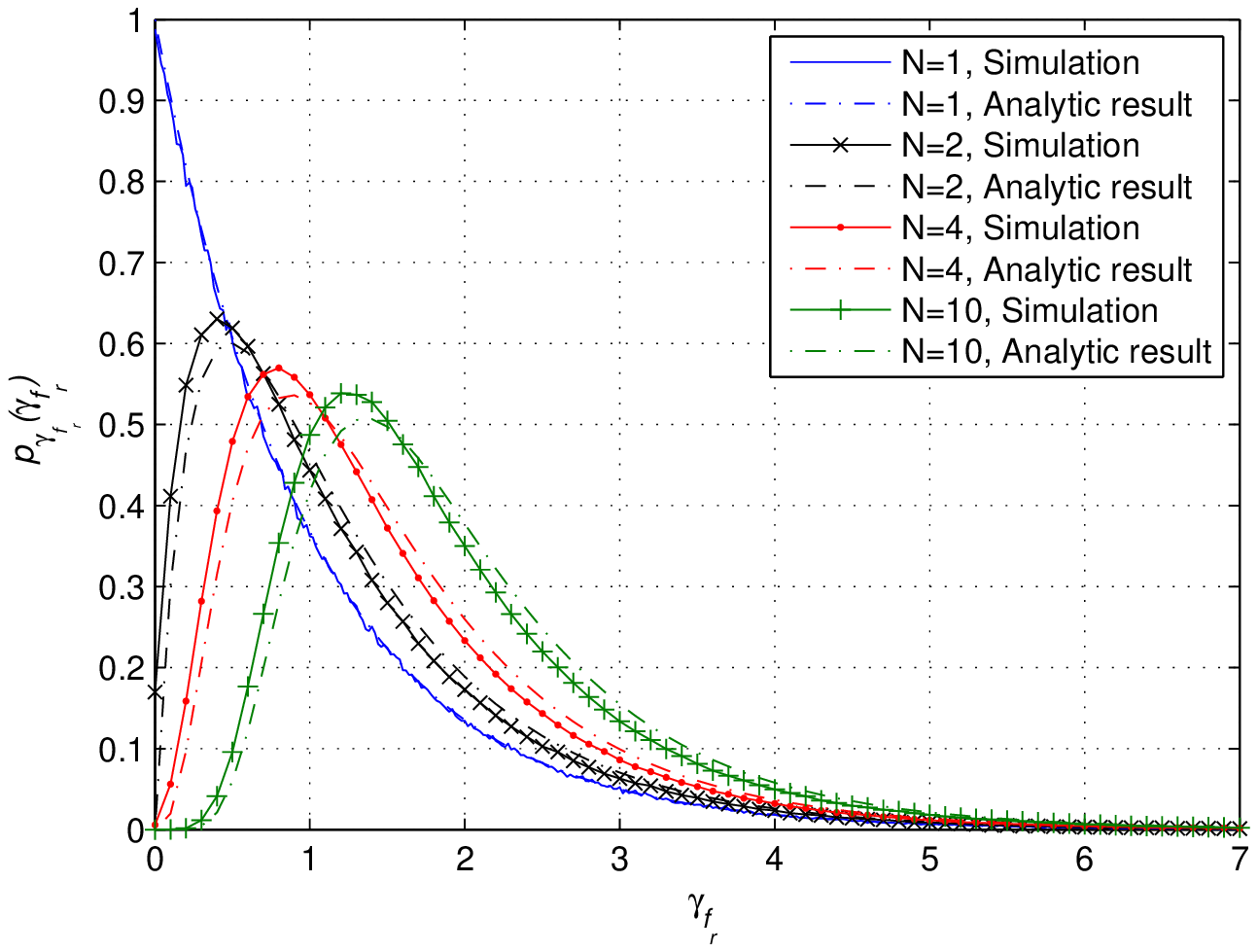}\\
  \caption{Comparison of the PDF of the received SNR at the selected relay $r$ in a network with $N$ relays.}\label{f1}
\end{figure}

From \eqref{14q}, $\text{Pr}[\chi=k]$ in \eqref{9} can be approximated as
\begin{align}\label{15q}
    \text{Pr}[\chi=k]&\approx \sqrt{1-\prod_{i=1}^N\left(1-e^{-\frac{2\mu_{k}}{\sigma^2_{i}}}\right)}
    \nonumber\\
    &-\sqrt{1-\prod_{i=1}^N
    \left(1-e^{-\frac{2\mu_{k-1}}{\sigma^2_{i}}}\right)}\triangleq \Omega_1(k),
\end{align}
for $k<l$, and
\begin{align}\label{16q}
    &\text{Pr}[\chi=k]\approx
    1-\sqrt{1-\prod_{i=1}^N\left(1-e^{-\frac{2\mu_{l-1}}{\sigma^2_{i}}}\right)}\triangleq \Omega_2(l),
\end{align}
for $k\geq l$.


Due to symmetry, the PDF of random variable $\gamma_{g_r}$, i.e.,
$\textnormal{f}_{\gamma_{g_{r}}}(\gamma)$, is same as the PDF of random variable $\gamma_{g_r}$, with perhaps different mean. Thus, the closed-form approximation for PDF of $\gamma_{g_r}$ can be found by the derivation of
$\textnormal{F}_{\gamma_{f_{r}}}(\gamma)$ in \eqref{14q} as
\begin{align}\label{22q}
    &\textnormal{f}_{\gamma_{g_r}}(\gamma)\approx\frac{1}{\sqrt{1-\displaystyle\prod_{i=1}^N\left(1-e^{-\frac{2\gamma}{\sigma^2_{i}}}\right)}}
    \sum_{i=1}^{N}\frac{    e^{-\frac{2\gamma}{\sigma^2_{i}}}     }{\sigma^2_{i}}\prod_{\underset{j\neq
    i}{j=1}}^N\!\!\left(1\!-e^{\!-\frac{2\gamma}{\sigma^2_{j}}}\right).
\end{align}
By substituting $\textnormal{f}_{\gamma_{g_r}}(\gamma)$ from \eqref{22q} into
\eqref{21}, $P_{\text{out}}(l\,|l> k)$ can be approximated as
\begin{align}\label{21q}
    &P_{\text{out}}(l\,|l> k)\approx\int_{\gamma_{f_0}=0}^{\mu_l}\int_{\gamma_{g_r}=0}^{\beta(\gamma_{f_0})}
    \frac{e^{-\frac{\gamma_{f_0}}{\sigma^2_{f_0}}}}{\sigma^2_{f_0}}\, \textnormal{f}_{\gamma_{g_r}}(\gamma_{g_r})\, d\gamma_{f_0}\,d\gamma_{g_r}\triangleq \Upsilon_2(l,k).
\end{align}
Therefore, using \eqref{9}, \eqref{20}, and \eqref{21q},
the outage probability in the $l$th stage of HARQ process can be
achieved as
\begin{align}\label{21oq}
    &P_{\text{out}}(l)\approx \sum_{k=1}^{l-1}\Omega_1(k)\Upsilon_2(l,k)+\sum_{k=l}^{L}\Omega_2(l)\left(1-e^{\frac{-\mu_l}{\sigma_{f_0}^2}}\right).
\end{align}

\subsection{Upper-Bound on Outage Probability}
For calculating the minimum diversity gain of HARQ wireless relay networks when selection strategy in \eqref{3} is used, it is enough to derive an upper-bound on the outage probability $P_{\text{out}}(l)$.

The random variable $\gamma_{f_r}$, which is corresponding the
source-relay channel of the selected relay, can be bounded as
\begin{equation}\label{4c}
    \gamma_{\max}\leq \gamma_{f_r}\leq \gamma^s_{\max},
\end{equation}
where $\gamma_{\max}$ is given in \eqref{2}
and $\gamma^s_{\max}$ is defined as
\begin{equation}\label{4a}
    \gamma^s_{\max}=\max_{i=1,\ldots,N}\left\{\gamma_{f_i}\right\}.
\end{equation}
The CDF of $\gamma^s_{\max}$ can be written as
\begin{align}\label{6ab}
    \text{Pr}\{\gamma^s_{\max}<\gamma\}&=\text{Pr}\{\gamma_{f_1}<\gamma,\gamma_{f_2}<\gamma,\ldots,
    \gamma_{f_N}<\gamma\}
    \nonumber\\
    &=\prod_{i=1}^N\left(1-e^{-\frac{\gamma}{\sigma^2_{f_i}}}\right).
\end{align}
Thus, it is easy to show that
the CDF of $\gamma_{f_r}$ can be bounded as
\begin{align}\label{18d}
    \text{Pr}\{\gamma^s_{\max}<\gamma\}&\leq\text{Pr}\{\gamma_{f_r}<\gamma\}\leq\text{Pr}\{\gamma_{\max}<\gamma\}.
\end{align}

Therefore, combining \eqref{9} and \eqref{18d}, an upper-bound on
$\text{Pr}[\chi=k]$ will be obtained as follows
\begin{align}\label{10c}
    \text{Pr}[\chi=k]\leq\text{Pr}[\gamma_{\max}<\mu_{k-1}]-\text{Pr}[\gamma^s_{\max}<\mu_{k}],
\end{align}
for $k<l$. From \eqref{11}, \eqref{6ab}, and \eqref{10c}, $\text{Pr}[\chi=k]$ for $k<l$ can be calculated as
\begin{align}\label{10cc}
    \text{Pr}[\chi=k]&\leq
    \prod_{i=1}^N\!\left(1-e^{-\mu_{k-1}\left(\frac{1}{\sigma^2_{f_i}}+\frac{1}{\sigma^2_{g_i}}\right)}\right)
    \!-\!\prod_{i=1}^N\!\left(1-e^{-\frac{\mu_{k}}{\sigma^2_{f_i}}}\right)
\nonumber\\
&    \triangleq \Lambda_1(k).
\end{align}

For $k\geq l$, by combining \eqref{9}, \eqref{11}, and \eqref{18d}, we have
\begin{align}\label{11cc}
    &\text{Pr}[\chi=k]\leq\text{Pr}[\gamma_{\max}<\mu_{l-1}]
    \nonumber\\
    &=\prod_{i=1}^N\left(1-e^{-\mu_{l-1}\left(\frac{1}{\sigma^2_{f_i}}+\frac{1}{\sigma^2_{g_i}}\right)}\right)\triangleq \Lambda_2(l).
\end{align}

Next, $P_{\text{out}}(l\,|l> k)$ in \eqref{21} can be upper-bounded as
\begin{align}\label{21j}
    &P_{\text{out}}(l\,|l> k)
    \nonumber\\
    &\leq\text{Pr}\left\{\gamma_{\max}<\frac{2^{R/(l-k)}}{\frac{P}{N_0}\,\left(1+\frac{P}{N_0}\gamma_{f_0}\right)
    ^{k/(l-k)}}-\gamma_{f_0}
    -\frac{N_0}{P}\right\}
    \nonumber\\
    &=\int_{\gamma_{f_0}=0}^{\mu_l}\int_{\gamma_{\max}=0}^{\beta(\gamma_{f_0})}
    \frac{e^{-\frac{\gamma_{f_0}}{\sigma^2_{f_0}}}}{\sigma^2_{f_0}}\, p_{\gamma_{\max}}(\gamma_{\max})\, d\gamma_{f_0}\,d\gamma_{\max}.
\end{align}

The PDF of random variable $\gamma_{\max}$, i.e.,
$\textnormal{f}_{\gamma_{\max}}(\gamma)$ can be found by the derivative of
$\text{Pr}\{\gamma_{\max}<\gamma\}$ in \eqref{11}. Thus, we have
\begin{align}\label{14hh}
    \textnormal{f}_{\gamma_{\max}}(\gamma)&=
    \sum_{i=1}^{N}   \left(\frac{1}{\sigma^2_{f_i}}+\frac{1}{\sigma^2_{g_i}}\right) e^{-\gamma \left(\frac{1}{\sigma^2_{f_i}}+\frac{1}{\sigma^2_{g_i}}\right)}
    \nonumber\\
    &\times \prod_{\underset{j\neq
    i}{j=1}}^N \left(1-e^{-\gamma\left(\frac{1}{\sigma^2_{f_j}}+\frac{1}{\sigma^2_{g_j}}\right)}\right).
\end{align}
Therefore, by substituting $\text{Pr}[\chi=k]$ from \eqref{10cc} and \eqref{11cc}, and $P_{\text{out}}(l\,|l\leq k)$ and $P_{\text{out}}(l\,|l> k)$ from \eqref{20} and \eqref{21j}, respectively, in \eqref{4}, an upper-bound on outage probability the $l$th stage of HARQ process, i.e., $P_{\text{out}}(l)$ can be achieved.

A tractable definition of the diversity gain is \cite[Eq.
(1.19)]{jaf05}
\begin{align}\label{14h}
G_d=-\lim_{\rho\rightarrow\infty}\frac{\log
\left(P_{\text{out}}\right)}{\log \left(\rho\right)},
\end{align}
where $\rho=\frac{P}{N_0}$. Thus, in the following, we investigate the
asymptotic behavior and diversity order of $P_{\text{out}}(l)$ in \eqref{4}.

From \eqref{14hh}, an upper-bound for $p_{\gamma_{\max}}(\gamma)$ can be found as
\begin{align}\label{14k}
    \textnormal{f}_{\gamma_{\max}}(\gamma)&\leq
    N \gamma^{N-1}\prod_{i=1}^N \left(\frac{1}{\sigma^2_{f_i}}+\frac{1}{\sigma^2_{g_i}}\right),
\end{align}
which is a tight bound when $\gamma\rightarrow 0$. Note that in high SNR scenario, the the behavior of the fading distribution around zero is important (see, e.g., \cite{rib05}).

Using \eqref{14k} and the fact that the exponential distribution is a
decreasing function of $\gamma_{f_0}$, $P_{\text{out}}(l\,|l> k)$ in \eqref{21j} can be further upper-bounded as
\begin{align}\label{kp}
    &P_{\text{out}}(l\,|l> k)
    \nonumber\\
    &\leq\int_{\gamma_{f_0}=0}^{\mu_l}\int_{\gamma_{\max}=0}^{\beta(\gamma_{f_0})}
    \frac{1}{\sigma^2_{f_0}}\, \gamma_{\max}^{N-1}\prod_{i=1}^N \left(\frac{1}{\sigma^2_{f_i}}+\frac{1}{\sigma^2_{g_i}}\right)\, d\gamma_{f_0}\,d\gamma_{\max}
    \nonumber\\
    &
    \leq\frac{\mu_l}{\sigma^2_{f_0}}\, \mu_{l-k}^{N}\prod_{i=1}^N \left(\frac{1}{\sigma^2_{f_i}}+\frac{1}{\sigma^2_{g_i}}\right)\triangleq \Psi(l,k).
\end{align}
Combining \eqref{4}, \eqref{20}, \eqref{10cc}, \eqref{11cc}, and \eqref{21p}, a closed-form upper-bound for the outage probability after $l$ HARQ round can be obtained as
\begin{align}\label{21p}
    &P_{\text{out}}(l)\leq \sum_{k=1}^{l-1}\Lambda_1(k)\Psi(l,k)+\sum_{k=l}^{L}\Lambda_2(l)\left(1-e^{\frac{-\mu_l}{\sigma_{f_0}^2}}\right).
\end{align}

Furthermore, using \eqref{15q}, and \eqref{16q}, another closed-form approximation for $P_{\text{out}}(l)$ can be obtained as
\begin{align}\label{21w}
    &P_{\text{out}}(l)\approx \sum_{k=1}^{l-1}\Omega_1(k)\Psi(l,k)+\sum_{k=l}^{L}\Omega_2(l)\left(1-e^{\frac{-\mu_l}{\sigma_{f_0}^2}}\right).
\end{align}

\begin{proposition}\label{a}
Assuming a HARQ system with $N$ potential relays nodes, the relay
selection strategy based on \eqref{3} can achieve the full diversity
order of $N+1$.
\end{proposition}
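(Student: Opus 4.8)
The plan is to sandwich $P_{\text{out}}(l)$ between two quantities that both behave like $\rho^{-(N+1)}$, where $\rho=P/N_0$, and then apply the diversity definition \eqref{14h}. The basic fact used throughout is that $\mu_k=\tfrac1\rho\big(2^{R/k}-1\big)$, so $\mu_k\to 0$ and $\mu_k=\Theta(\rho^{-1})$ for every fixed $k\ge 1$ (while $\mu_0=\infty$), and hence $1-e^{-a\mu_k}=\Theta(\rho^{-1})$ for any constant $a>0$.

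For the achievability direction $G_d\ge N+1$, I would start from the closed-form upper bound \eqref{21p}, namely $P_{\text{out}}(l)\le\sum_{k=1}^{l-1}\Lambda_1(k)\Psi(l,k)+\sum_{k=l}^{L}\Lambda_2(l)\big(1-e^{-\mu_l/\sigma_{f_0}^2}\big)$, and show that every summand is $O(\rho^{-(N+1)})$. In $\Psi(l,k)=\tfrac{\mu_l}{\sigma_{f_0}^2}\,\mu_{l-k}^{N}\prod_{i=1}^N\!\big(\tfrac1{\sigma_{f_i}^2}+\tfrac1{\sigma_{g_i}^2}\big)$ one has $1\le k\le l-1$, so $\mu_l$ and $\mu_{l-k}$ are both $\Theta(\rho^{-1})$ and $\Psi(l,k)=\Theta(\rho^{-(N+1)})$; since $0\le\Lambda_1(k)\le 1$, the first sum is $O(\rho^{-(N+1)})$. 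For $l\ge 2$, $\Lambda_2(l)=\prod_{i=1}^N\!\big(1-e^{-\mu_{l-1}(1/\sigma_{f_i}^2+1/\sigma_{g_i}^2)}\big)=\Theta(\rho^{-N})$ while $1-e^{-\mu_l/\sigma_{f_0}^2}=\Theta(\rho^{-1})$, so each term of the second sum is $\Theta(\rho^{-(N+1)})$ too. The only delicate point is the $k=1$ term, where $\Lambda_1(1)\to 1$ (at high SNR the relay almost surely decodes already in the first round); this is harmless since it is still multiplied by $\Psi(l,1)=O(\rho^{-(N+1)})$. Hence $P_{\text{out}}(l)=O(\rho^{-(N+1)})$ for all $2\le l\le L$, in particular for $l=L$ whenever $L\ge 2$; the beamforming option \eqref{1e} only adds an array gain and so obeys the same bound.

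For the converse $G_d\le N+1$, I would exhibit an outage event of probability $\Theta(\rho^{-(N+1)})$. Fix $\delta>0$ with $l\log_2(1+4\delta)<R$ and let $A=\{\gamma_{f_0}<\delta/\rho\}\cap\bigcap_{i=1}^N\{\gamma_{g_i}<\delta/\rho\}$. On $A$ one has $\gamma_{g_r}\le\max_i\gamma_{g_i}<\delta/\rho$, so in each HARQ round the destination gains at most $\log_2(1+\rho\gamma_{f_0})\le\log_2(1+\delta)$ when the relay is silent and at most $\log_2(1+\rho\gamma_{f_0}+\rho\gamma_{g_r})\le\log_2(1+2\delta)$ under Alamouti cooperation (and $\log_2(1+4\delta)$ under beamforming); by \eqref{18} the information accumulated after $l$ rounds is then below $R$ for every value of $\chi$, so $A$ forces an outage. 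By independence of the links, $\Pr[A]=\big(1-e^{-\delta/(\rho\sigma_{f_0}^2)}\big)\prod_{i=1}^N\big(1-e^{-\delta/(\rho\sigma_{g_i}^2)}\big)=\Theta(\rho^{-(N+1)})$, whence $P_{\text{out}}(l)\ge\Pr[A]$ gives $G_d\le N+1$. Combined with the previous paragraph, $G_d=N+1$.

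I expect the achievability side to carry the real content: one must confirm that $\Lambda_1(k)\Psi(l,k)$ genuinely decays at rate $\rho^{-(N+1)}$ even though $\Lambda_1(1)\not\to 0$, i.e. that the dominant error mechanism is ``the relay has decoded but the combined source-plus-relay link to the destination is still in outage,'' a probability governed by $\gamma_{\max}$ (which carries diversity $N$ through the bound \eqref{14k}) times the direct link $\gamma_{f_0}$ (diversity $1$). Since the bound \eqref{21p} already packages exactly this decomposition, the remaining work is just the elementary SNR asymptotics above together with care at the $\mu_0=\infty$ endpoint of the first summation.
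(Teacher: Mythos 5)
Your proof is correct, and on the achievability side it follows the same route as the paper: both start from the closed-form upper bound \eqref{21p} and extract the $\rho^{-(N+1)}$ decay. But there are two substantive differences worth noting. First, the paper's Appendix III asserts that the entire first sum $\sum_{k=1}^{l-1}\Lambda_1(k)\Psi(l,k)$ is $O(\rho^{-(2N+1)})$ and therefore negligible, which is only true for $k\ge 2$; at $k=1$ one has $\mu_0=\infty$ in \eqref{10cc}, so $\Lambda_1(1)=1-\prod_{i=1}^N\bigl(1-e^{-\mu_1/\sigma_{f_i}^2}\bigr)\to 1$ and the $k=1$ term is genuinely $\Theta(\rho^{-(N+1)})$. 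Your treatment handles this endpoint correctly (bounding $\Lambda_1(k)$ by $1$ and letting $\Psi(l,1)=\Theta(\rho^{-(N+1)})$ carry the decay), whereas the paper's intermediate claim is wrong there; the consequence is only that the constant $\Delta(l)$ in \eqref{60} undercounts the $k=1$ contribution, not that the diversity order changes, so the paper's conclusion survives. Second, you supply a converse: the explicit deep-fade event $A=\{\gamma_{f_0}<\delta/\rho\}\cap\bigcap_{i=1}^N\{\gamma_{g_i}<\delta/\rho\}$ with $\Pr[A]=\Theta(\rho^{-(N+1)})$, which forces $I_{\mathrm{tot},k,l}<R$ for every value of $\chi$ and hence shows $G_d\le N+1$. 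The paper never proves this direction; its argument, read strictly against definition \eqref{14h}, only establishes $G_d\ge N+1$. Your addition is what actually justifies the word "equal" (rather than "at least") in the diversity claim, at the cost of a short extra argument that the paper leaves implicit.
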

\begin{proof}
The proof is given in Appendix III. 
\end{proof}

\section{Numerical Analysis}
In this section, the performance of the proposed relay-selection HARQ system is studied through numerical results.
We used the equal power allocation among the source and the selected relay. Assume the relays and the destination have the same value of noise power, and
all the links have unit-variance Rayleigh flat fading, i.e., $\sigma^2_{f_i}=\sigma^2_{g_i}=\sigma^2_{f_0}=1$. It is also assumed that rate $R$ is normalized to 1.
We compare the transmit SNR
$\frac{P}{N_0}$ versus outage probability performance. 
The block fading model is used, in which channel coefficients changed randomly in time to isolate the benefits of spatial diversity. The simulation result is averaged over 3'000'000 transmitted symbols (channel realization trials).

Fig. \ref{f3} confirms that the analytical results attained in
Section~III for the outage probability have an accurate performance as the simulation results. We consider the maximum number of HARQ rounds to be $L=5$. The outage probability at the 2nd HARQ round, i.e., $P_{\text{out}}(l=2)$, is compared for two different number of relays $N=2,4$. One can see the approximate outage probability derived in \eqref{21oq} has the similar performance as the simulated curved for all values of SNR. In addition, the closed-form outage probability expression in \eqref{21w} well approximates the simulated results, especially in medium and high SNR conditions. Furthermore, Fig. \ref{f3} shows that the upper-bound expression in \eqref{21p} is a tight upper-bound.
The asymptotic outage probability derived in \eqref{60} is also depicted in Fig. \ref{f3} which confirms the full-diversity order of the proposed scheme.

\begin{figure}[e]
  \centering
  \vspace{-.3cm}
  \includegraphics[width=\columnwidth]{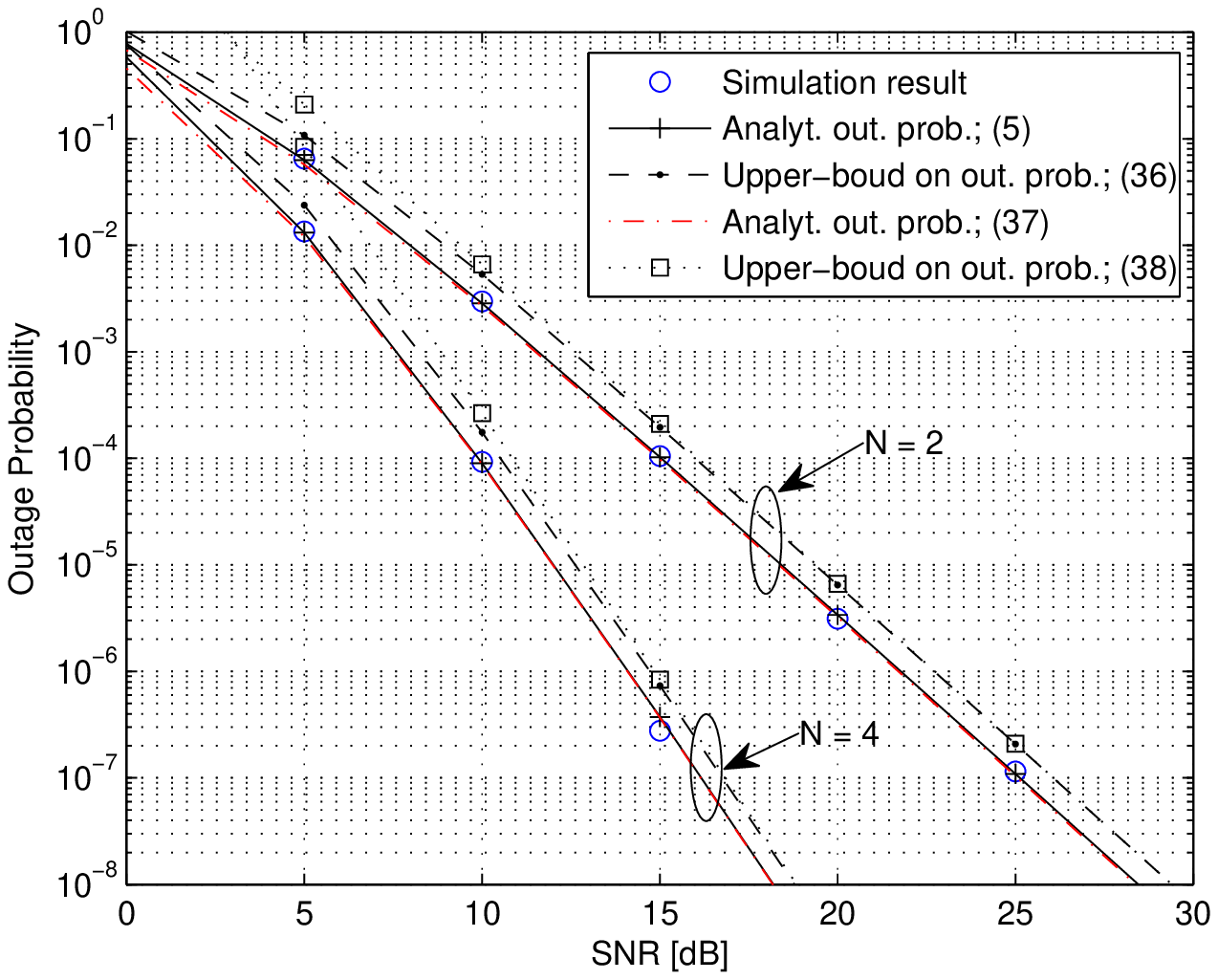}\\
  \vspace{-.3cm}
  \caption{The outage probability $P_{\text{out}}(l)$ curves of delay-limited HARQ networks employing opportunistic relaying with 2 and 4 relays, when $R=1$ bits/sec, $L=5$ is the maximum number of rounds, and we consider HARQ round of $l=2$.
  }\label{f3}
\end{figure}

It is straightforward to show that the outage probability for direct transmission after $l$ HARQ round is \cite[Eq. (7)]{nar08}
\begin{align}\label{61}
    &P_{\text{out},d}(l)= 1-\exp\left(-\frac{2^{R/l}-1}{\rho \, \sigma^2_{f_0}}\right).
\end{align}
In Fig.~\ref{g2}, the outage probability at the $l=L=5$th HARQ round of the system with different number of relays are considered. 
After selecting the best relay, Alamouti code is employed in the second transmission phase. Compared to the single HARQ relaying system proposed in \cite{nar08}, the proposed HARQ opportunistic relaying system with $N=2,3,4$ relays outperforms considerably for all SNR conditions. For example, it can be seen that in outage probability of $10^{-3}$, the system with two relays saves around $8$ dB in SNR compared to the single relay HARQ system.
Furthermore, it can be checked that the system with $N$ relays can achieve the diversity order of $N+1$.

A delay-limited throughput where defined in \eqref{b2} explicitly accounts for finite delay constraints and
associated non-zero packet outage probabilities. It can be shown that for small outage
probabilities, this delay-limited throughput is greater than the
conventional long-term average throughput defined in \eqref{b3}
In addition to finite delay constraint, represented by the
maximum number $L$ of HARQ rounds, higher-layer applications usually require that $P_{\text{out}}\leq \rho_{\max}$, where $\rho_{\max}$ is a target outage probability.
The total LT and DL throughput are studied in Fig.~\ref{g3} subject to user QoS constraints, represented by outage probability target $\rho_{\max}$ and delay constraint $L$.
In Fig.~\ref{g3}, the total LT and DL throughputs of opportunistic relying HARQ system with $N=2,4$ relays are plotted as a function of SNR and compared with the direct transmission HARQ system with $L=3$, $\rho_{\max}=10^{-3}$, and the following linear relay geometry: $\sigma^2_{f_r}=\sigma^2_{g_r}=\sigma^2_{0}=1$.
As expected, the presence of the relays significantly increases the throughput. Furthermore, in agreement with \cite[Eq. (5)]{nar08}, it can be seen that the delay-limited
throughput is greater than the long-term average throughput.
An interesting observation is that as well as the diversity gain achieved by the opportunistic relaying HARQ system, which is previously shown in Fig.~\ref{g2}, obtaining higher throughputs are possible.
This behavior underscores the importance of the proposed system.

\begin{figure}[e]
  \centering
  \vspace{-.3cm}
  \includegraphics[width=\columnwidth]{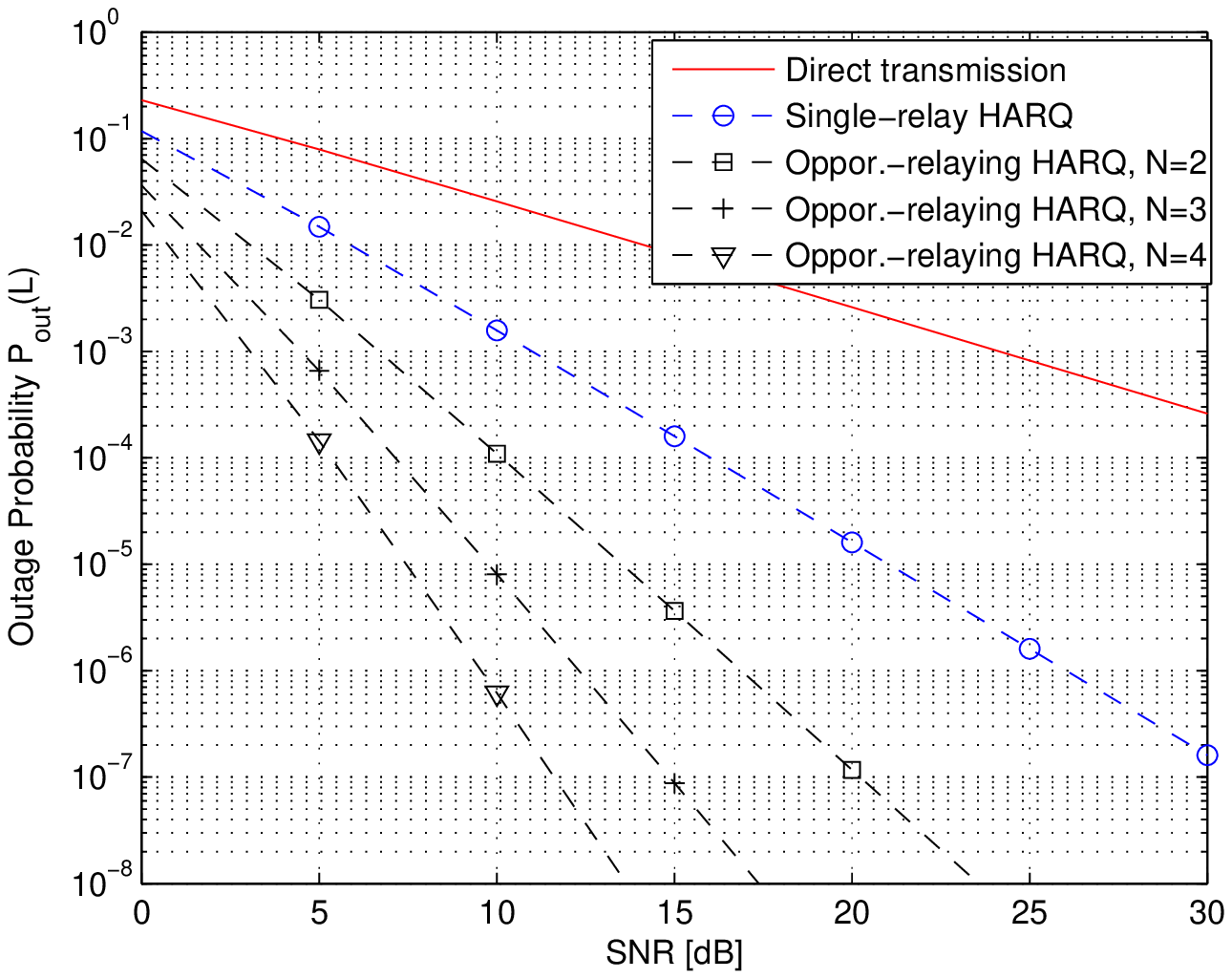}\\
  \vspace{-.3cm}
  \caption{The outage probability performance of the proposed relay selection HARQ system versus transmit SNR in a network with different number of relays, $R=1$ bits/sec, $L=l=5$ HARQ rounds, and $\sigma^2_{f_i}=\sigma^2_{g_i}=\sigma^2_{f_0}=1$.
  }\label{g2}
\end{figure}
\begin{figure}[e]
  \centering
  \vspace{-.3cm}
  \includegraphics[width=\columnwidth]{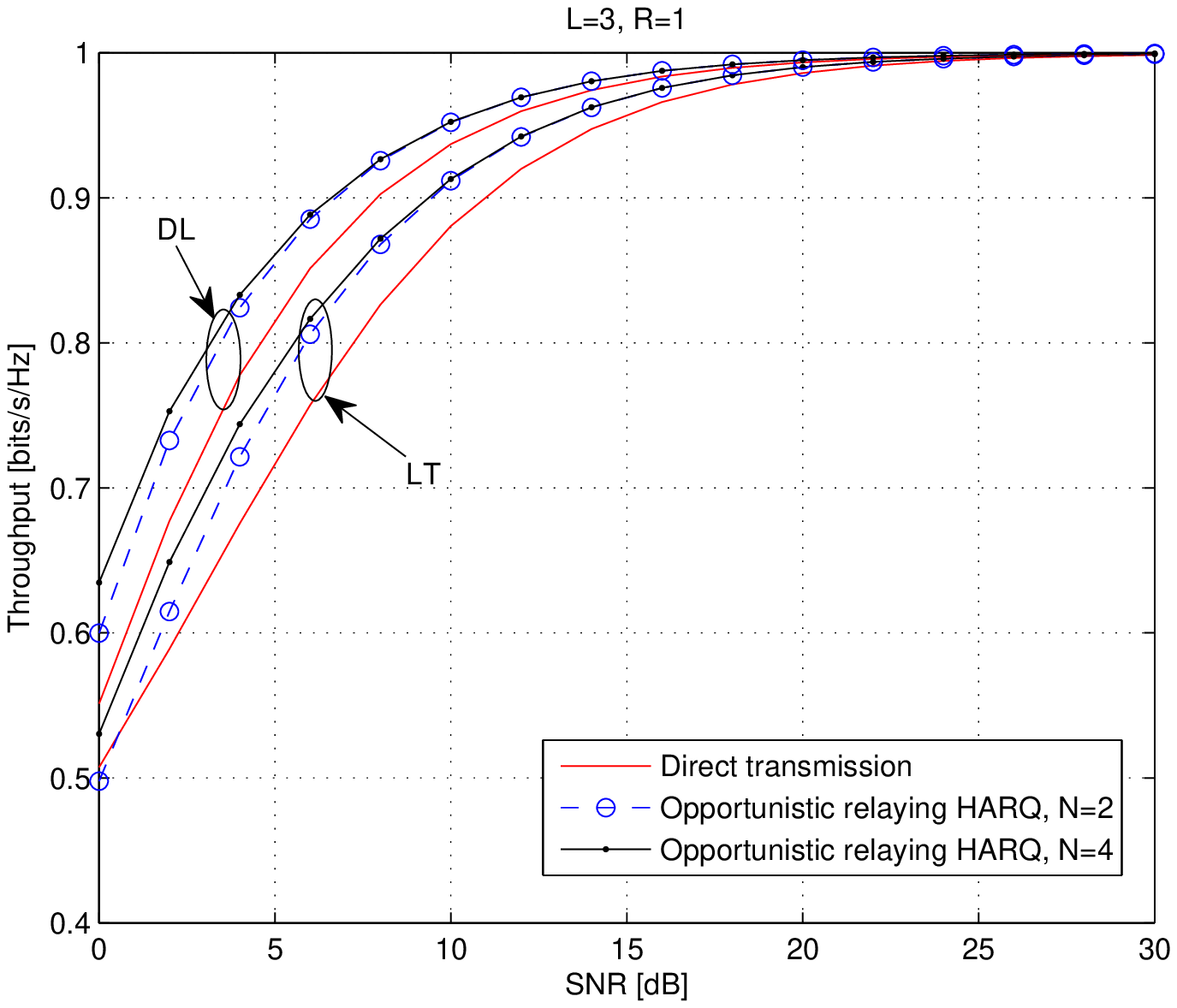}\\
  \vspace{-.3cm}
  \caption{ The delay-limited (DL) and long-term (LT) throughputs of direct transmission and relay selection HARQ system versus transmit SNR for target outage probability $\rho_{\max}=10^{-3}$, $L=3$ HARQ rounds, physical layer rate $R=1$ bits/sec, and $\sigma^2_{f_r}=\sigma^2_{g_r}=\sigma^2_{0}=1$, $\sigma^2_{f_i}=\sigma^2_{g_i}=\sigma^2_{f_0}=1$.
  }\label{g3}
\end{figure}

\section{Conclusion}
In this paper, we proposed a throughput-efficient relay selection HARQ system over Rayleigh fading. The throughput-delay performance of a half-duplex
multi-branch relay system with HARQ was analyzed.
A distributed relay selection scheme was introduced for HARQ multi-relay networks by using ACK/NACK signals transmitted by destination.
We evaluated the average throughput and outage error
probability performance and showed that the proposed technique significantly
reduces the multiplexing loss due to the half-duplex constraint
while providing attractive outage error probability performance.
The closed-form expressions outage probability were derived, defined as the probability of packet failure after $L$ HARQ rounds, in half-duplex. For sufficiently high SNR, we derived a simple closed-form average outage probability expression for a HARQ system with multiple cooperating branches.
Based on the derived upper-bound expressions, it was shown that the proposed scheme
achieves the full spatial diversity order of $N+1$ in a non-orthogonal relay network with $N$ parallel relays.
The analysis presented here allows quantitative evaluation of
the throughput-delay performance gain of the relay selection channel
compared to direct transmission.
The numerical results confirmed that the proposed schemes can bring diversity and multiplexing gains in the wireless relay networks.

\appendices

\section{Proof of Proposition 1}
First, we define the auxalary random variables $m_{i}\triangleq\min\{\gamma_{f_{i}},\gamma_{g_{i}}\}$ for $i=1,2,...,N$. Since $\{\gamma_{f_{i}},\gamma_{g_{i}}\}_{i=1}^{N}$ are independent exponential random variables, $m_{i}$s are also independent exponential random variables with the following CDF:
\begin{equation}\label{PE1}
\textnormal{F}_{m_{i}}(x)=1-e^{-(\frac{1}{\sigma^{2}_{f_{i}}}+\frac{1}{\sigma^{2}_{g_{i}}})x}\hspace{2mm},\hspace{2mm}x\geqslant 0
\end{equation}
Also, using partitioning theorem, we have
\begin{equation}\label{PE3}
\Pr\{\gamma_{f_{r}}\leqslant \gamma\}=\sum_{j=1}^{N}\Pr\{(\gamma_{f_{r}}\leqslant \gamma)\cap(r=j)\} .
\end{equation}
For $j=1,2,...,N$, the summands of \eqref{PE3} can be obtained as follows
\begin{align}\label{PE4}
\Pr\{(&\gamma_{f_{r}}\leqslant \gamma)\cap(r=j)\} \nonumber \\
&=\Pr\{(m_{j}\leqslant \gamma)\cap\big(\bigcap_{\underset{i\neq j}{i=1}}^{N}(m_{i}<m_{j})\big)\cap(\gamma_{f_{j}}\leqslant \gamma)\} \nonumber \\
&=\Pr\{(m_{j}\leqslant \gamma)\cap\big(\bigcap_{\underset{i\neq j}{i=1}}^{N}(m_{i}<m_{j})\big)\} \nonumber \\
&\hspace{3.3mm}-\Pr\{(m_{j}\leqslant \gamma)\cap\big(\bigcap_{\underset{i\neq j}{i=1}}^{N}(m_{i}<m_{j})\big)\cap(\gamma_{f_{j}}> \gamma)\} .
\end{align}

By substituting \eqref{PE4} in \eqref{PE3}, we obtain
\begin{align}\label{PE4.1}
\Pr\{\gamma_{f_{r}}&\leqslant \gamma\}=\sum_{j=1}^{N}\Pr\{(m_{j}\leqslant \gamma)\cap\big(\bigcap_{\underset{i\neq j}{i=1}}^{N}(m_{i}<m_{j})\big)\} \nonumber \\
&-\sum_{j=1}^{N} \Pr\{(m_{j}\leqslant \gamma)\cap\big(\bigcap_{\underset{i\neq j}{i=1}}^{N}(m_{i}<m_{j})\big)\cap(\gamma_{f_{j}}> \gamma)\}  \nonumber \\
&=\Pr\{\max(m_{1},m_{2},...,m_{N})\leqslant \gamma\} \nonumber \\
&-\sum_{j=1}^{N} \Pr\{(m_{j}\leqslant \gamma)\cap\big(\bigcap_{\underset{i\neq j}{i=1}}^{N}(m_{i}<m_{j})\big)\cap(\gamma_{f_{j}}> \gamma)\}  .
\end{align}
Since $m_{i}$s are independent, the first term on the right side of \eqref{PE4.1} is given by
\begin{align}\label{PE4.2}
\Pr\{&\max(m_{1},m_{2},...,m_{N})\leqslant \gamma\}=\Pr\{\bigcap_{i=1}^{N}(m_{i}\leqslant\gamma)\} \nonumber \\
&=\prod_{i=1}^{N}\Pr\{m_{i}\leqslant\gamma\}=\prod_{i=1}^{N}\textnormal{F}_{m_{i}}(\gamma) .
\end{align}

Also, the summand of the summation on the right side of  \eqref{PE4.1} is obtained as follows
\begin{align}\label{PE6}
\Pr\{(&m_{j}\leqslant \gamma)\cap\big(\bigcap_{\underset{i\neq j}{i=1}}^{N}(m_{i}<m_{j})\big)\cap(\gamma_{f_{j}}> \gamma)\} \nonumber \\
&=\Pr\{\gamma_{f_{j}}>\gamma\} \Pr\{(m_{j}\leqslant \gamma)\cap\big(\bigcap_{\underset{i\neq j}{i=1}}^{N}(m_{i}<m_{j})\big)|\gamma_{f_{j}}>\gamma\} \nonumber \\
&=\Pr\{\gamma_{f_{j}}>\gamma\} \Pr\{(\gamma_{g_{j}}\leqslant \gamma)\cap\big(\bigcap_{\underset{i\neq j}{i=1}}^{N}(m_{i}<\gamma_{g_{j}})\big)\}\nonumber \\
&=e^{-\frac{\gamma}{\sigma^{2}_{f_{j}}}}\int_{0}^{\gamma}\frac{1}{\sigma^{2}_{g_{j}}}e^{-\frac{\beta}{\sigma^{2}_{g_{j}}}}\Pr\{\bigcap_{\underset{i\neq j}{i=1}}^{N}(m_{i}<\beta)\}d\beta \nonumber \\
&=\frac{1}{\sigma^{2}_{g_{j}}}e^{-\frac{\gamma}{\sigma^{2}_{f_{j}}}}\int_{0}^{\gamma}e^{-\frac{\beta}{\sigma^{2}_{g_{j}}}}\prod_{\underset{i\neq j}{i=1}}^{N}\textnormal{F}_{m_{i}}(\beta)d\beta .
\end{align}
Substituting from \eqref{PE1} into \eqref{PE4.2} and \eqref{PE6}, one can obtain the CDF in \eqref{P1} using \eqref{PE4.1} to \eqref{PE6}. Also, taking derivative of \eqref{P1} with respect to $\gamma$, results in the PDF of $\gamma_{f_{r}}$, given by \eqref{P2}.

\section{Proof of Proposition 2}
For deriving the CDF of $\gamma_{f_r}$, we should first find the
CDF of  $\gamma_{\max}$, which can be written as
\begin{align}\label{10}
    \text{Pr}\{\gamma_{\max}<\gamma\}&=\text{Pr}\{\gamma_1<\gamma,\gamma_2<\gamma,\ldots,\gamma_N<\gamma\}
\end{align}
where $\gamma_i=\min\left\{\gamma_{f_i},\gamma_{g_i}\right\}$ is again
an exponential random variable (RV) with the parameter equal to
the sum of parameters of exponential RV $\gamma_{f_i}$ and
$\gamma_{g_i}$, i.e., $1/\sigma^2_{f_i}$ and $1/\sigma^2_{g_i}$, respectively.

Thus, assuming that all channel coefficients are independent of each others, we can rewrite \eqref{10} as
\begin{align}\label{11}
    \text{Pr}\{\gamma_{\max}<\gamma\}&=\prod_{i=1}^N\left(1-e^{-\gamma\left(\frac{1}{\sigma^2_{f_i}}
    +\frac{1}{\sigma^2_{g_i}}\right)}\right).
\end{align}
On the other hand, we have
\begin{align}\label{12}
    &\text{Pr}\{\gamma_{\max}<\gamma\}=1-\text{Pr}\{\min\left\{\gamma_{f_r},\gamma_{g_r}\right\}>\gamma\}
    \nonumber\\
    &=1-\text{Pr}\{\gamma_{f_r}>\gamma,\gamma_{g_r}>\gamma\}\approx 1-\text{Pr}\{\gamma_{f_r}\!>\!\gamma\}\text{Pr}\{\gamma_{g_r}\!>\!\gamma\},
\end{align}
where the last equality is an approximation as if $\gamma_{f_r}$ and $\gamma_{g_r}$ are independent.
For simplicity, we
assume equidistance source-relay and relay-destination links, i.e.,
that $\sigma^2_{f_i}=\sigma^2_{g_i}=\sigma^2_{i}$. Since we have
assumed that $\gamma_{f_i}$ and $\gamma_{g_i}$ have the same statistics,
using \eqref{11} and \eqref{12}, we have
\begin{align}\label{13}
    &\text{Pr}\{\gamma_{f_r}<\gamma\}=\text{Pr}\{\gamma_{g_r}<\gamma\}
    \approx 1-\sqrt{1-\prod_{i=1}^N\left(1-e^{-\frac{2\gamma}{\sigma^2_{i}}}\right)}.
\end{align}

\section{Proof of Proposition 3}
From a Taylor series expansion, it can be shown that the first term in \eqref{21p} is $O(1/\rho^{2N+1})$. From \eqref{21p}, and by representing the factor $\mu_{k}$ in terms of the SNR ratio $\rho$, the outage probability in high SNR can be written as
\begin{align}\label{60}
    &P_{\text{out}}(l)\leq \frac{\Delta(l)}{\rho^{N+1}},
\end{align}
where
$$
    \Delta(l)=\left(2^{\frac{R}{l}}-1\right)\left(2^{\frac{R}{l-1}}-1\right)^{\!N} \frac{L-l+1}{\sigma_{f_0}^2}
    \prod_{i=1}^N \! \left(\frac{1}{\sigma^2_{f_i}}+\frac{1}{\sigma^2_{g_i}}\right).
$$
Hence, observing \eqref{60}, the diversity order defined in \eqref{14h} is equal to
$N+1$, which is the full spatial diversity for $N+1$ transmitting nodes.

\bibliographystyle{ieeetr}
\bibliography{references}

\end{document}